\newcommand{\Rmnum}[1]{\expandafter\@slowromancap\romannumeral #1@}
\newcommand\abs[1]{\left\lvert #1 \right\rvert}
\newcommand\blfootnote[1]{%
  \begingroup
  \renewcommand\thefootnote{}\footnote{#1}%
  \addtocounter{footnote}{-1}%
  \endgroup
}
\theoremstyle{plain}
\newtheorem{lemma}{Lemma}
\newtheorem{remark}{Remark}
\def\thmhead@plain#1#2#3{%
  \thmname{#1}\thmnumber{\@ifnotempty{#1}{ }\@upn{#2}}%
  \thmnote{ {\the\thm@notefont#3}}}
\let\thmhead\thmhead@plain
\begin{document}
\newcommand{\fs}{\hspace{0.07in}}
\newcommand{\bs}{\hspace{-0.1in}}
\newcommand{\re}{{\rm Re} \, }
\newcommand{\e}{{\rm E} \, }
\newcommand{\p}{{\rm P} \, }
\newcommand{\cn}{{\cal CN} \, }
\newcommand{\n}{{\cal N} \, }
\newcommand{\ba}{\begin{array}}
\newcommand{\ea}{\end{array}}
\newcommand{\be}{\begin{displaymath}}
\newcommand{\ee}{\end{displaymath}}
\newcommand{\ben}{\begin{equation}}  
\newcommand{\een}{\end{equation}}
\newcommand{\bea}{\begin{equation}\begin{aligned}}
\newcommand{\eea}{\end{aligned}\end{equation}}      
\newcommand{\bena}{\begin{eqnarray}}
\newcommand{\eena}{\end{eqnarray}}
\newcommand{\beqa}{\begin{eqnarray*}}
\newcommand{\enqa}{\end{eqnarray*}}
\newcommand{\f}{\frac}
\newcommand{\bc}{\begin{center}}
\newcommand{\ec}{\end{center}}
\newcommand{\bi}{\begin{itemize}}
\newcommand{\ei}{\end{itemize}}
\newcommand{\benu}{\begin{enumerate}}
\newcommand{\eenu}{\end{enumerate}}
\newcommand{\bdes}{\begin{description}}
\newcommand{\edes}{\end{description}}
\newcommand{\bt}{\begin{tabular}}
\newcommand{\et}{\end{tabular}}
\newcommand{\vs}{\vspace}
\newcommand{\hs}{\hspace}
\newcommand{\sort}{\rm sort \,}

\newcommand \thetabf{{\mbox{\boldmath$\theta$\unboldmath}}}
\newcommand{\Phibf}{\mbox{${\bf \Phi}$}}
\newcommand{\Psibf}{\mbox{${\bf \Psi}$}}
\newcommand \alphabf{\mbox{\boldmath$\alpha$\unboldmath}}
\newcommand \betabf{\mbox{\boldmath$\beta$\unboldmath}}
\newcommand \gammabf{\mbox{\boldmath$\gamma$\unboldmath}}
\newcommand \deltabf{\mbox{\boldmath$\delta$\unboldmath}}
\newcommand \epsilonbf{\mbox{\boldmath$\epsilon$\unboldmath}}
\newcommand \zetabf{\mbox{\boldmath$\zeta$\unboldmath}}
\newcommand \etabf{\mbox{\boldmath$\eta$\unboldmath}}
\newcommand \iotabf{\mbox{\boldmath$\iota$\unboldmath}}
\newcommand \kappabf{\mbox{\boldmath$\kappa$\unboldmath}}
\newcommand \lambdabf{\mbox{\boldmath$\lambda$\unboldmath}}
\newcommand \mubf{\mbox{\boldmath$\mu$\unboldmath}}
\newcommand \nubf{\mbox{\boldmath$\nu$\unboldmath}}
\newcommand \xibf{\mbox{\boldmath$\xi$\unboldmath}}
\newcommand \pibf{\mbox{\boldmath$\pi$\unboldmath}}
\newcommand \rhobf{\mbox{\boldmath$\rho$\unboldmath}}
\newcommand \sigmabf{\mbox{\boldmath$\sigma$\unboldmath}}
\newcommand \taubf{\mbox{\boldmath$\tau$\unboldmath}}
\newcommand \upsilonbf{\mbox{\boldmath$\upsilon$\unboldmath}}
\newcommand \phibf{\mbox{\boldmath$\phi$\unboldmath}}
\newcommand \varphibf{\mbox{\boldmath$\varphi$\unboldmath}}
\newcommand \chibf{\mbox{\boldmath$\chi$\unboldmath}}
\newcommand \psibf{\mbox{\boldmath$\psi$\unboldmath}}
\newcommand \omegabf{\mbox{\boldmath$\omega$\unboldmath}}
\newcommand \Sigmabf{\hbox{$\bf \Sigma$}}
\newcommand \Upsilonbf{\hbox{$\bf \Upsilon$}}
\newcommand \Omegabf{\hbox{$\bf \Omega$}}
\newcommand \Deltabf{\hbox{$\bf \Delta$}}
\newcommand \Gammabf{\hbox{$\bf \Gamma$}}
\newcommand \Thetabf{\hbox{$\bf \Theta$}}
\newcommand \Lambdabf{\hbox{$\bf \Lambda$}}
\newcommand \Xibf{\hbox{\bf$\Xi$}}
\newcommand \Pibf{\hbox{\bf$\Pi$}}
\newcommand \abf{{\bf a}}
\newcommand \bbf{{\bf b}}
\newcommand \cbf{{\bf c}}
\newcommand \dbf{{\bf d}}
\newcommand \ebf{{\bf e}}
\newcommand \fbf{{\bf f}}
\newcommand \gbf{{\bf g}}
\newcommand \hbf{{\bf h}}
\newcommand \ibf{{\bf i}}
\newcommand \jbf{{\bf j}}
\newcommand \kbf{{\bf k}}
\newcommand \lbf{{\bf l}}
\newcommand \mbf{{\bf m}}
\newcommand \nbf{{\bf n}}
\newcommand \obf{{\bf o}}
\newcommand \pbf{{\bf p}}
\newcommand \qbf{{\bf q}}
\newcommand \rbf{{\bf r}}
\newcommand \sbf{{\bf s}}
\newcommand \tbf{{\bf t}}
\newcommand \ubf{{\bf u}}
\newcommand \vbf{{\bf v}}
\newcommand \wbf{{\bf w}}
\newcommand \xbf{{\bf x}}
\newcommand \ybf{{\bf y}}
\newcommand \zbf{{\bf z}}
\newcommand \rbfa{{\bf r}}
\newcommand \xbfa{{\bf x}}
\newcommand \ybfa{{\bf y}}
\newcommand \Abf{{\bf A}}
\newcommand \Bbf{{\bf B}}
\newcommand \Cbf{{\bf C}}
\newcommand \Dbf{{\bf D}}
\newcommand \Ebf{{\bf E}}
\newcommand \Fbf{{\bf F}}
\newcommand \Gbf{{\bf G}}
\newcommand \Hbf{{\bf H}}
\newcommand \Ibf{{\bf I}}
\newcommand \Jbf{{\bf J}}
\newcommand \Kbf{{\bf K}}
\newcommand \Lbf{{\bf L}}
\newcommand \Mbf{{\bf M}}
\newcommand \Nbf{{\bf N}}
\newcommand \Obf{{\bf O}}
\newcommand \Pbf{{\bf P}}
\newcommand \Qbf{{\bf Q}}
\newcommand \Rbf{{\bf R}}
\newcommand \Sbf{{\bf S}}
\newcommand \Tbf{{\bf T}}
\newcommand \Ubf{{\bf U}}
\newcommand \Vbf{{\bf V}}
\newcommand \Wbf{{\bf W}}
\newcommand \Xbf{{\bf X}}
\newcommand \Ybf{{\bf Y}}
\newcommand \Zbf{{\bf Z}}
\newcommand \Omegabbf{{\bf \Omega}}
\newcommand \Rssbf{{\bf R_{ss}}}
\newcommand \Ryybf{{\bf R_{yy}}}
\newcommand \Cset{{\cal C}}
\newcommand \Rset{{\cal R}}
\newcommand \Zset{{\cal Z}}
\newcommand{\otheta}{\stackrel{\circ}{\theta}}
\newcommand{\defeq}{\stackrel{\bigtriangleup}{=}}
\newcommand{\oabf}{{\bf \breve{a}}}
\newcommand{\odbf}{{\bf \breve{d}}}
\newcommand{\oDbf}{{\bf \breve{D}}}
\newcommand{\oAbf}{{\bf \breve{A}}}
\renewcommand \vec{{\mbox{vec}}}
\newcommand{\Acalbf}{\bf {\cal A}}
\newcommand{\calZbf}{\mbox{\boldmath $\cal Z$}}
\newcommand{\feop}{\hfill \rule{2mm}{2mm} \\}
\newtheorem{theorem}{Theorem}

\newcommand{\Rnum}{{\mathbb R}}
\newcommand{\Cnum}{{\mathbb C}}
\newcommand{\Znum}{{\mathbb Z}}
\newcommand{\Enum}{{\mathbb E}}
\newcommand{\Mnum}{{\mathbb M}}
\newcommand{\Nnum}{{\mathbb N}}
\newcommand{\Inum}{{\mathbb I}}

\newcommand{\Acal}{{\cal A}}
\newcommand{\Bcal}{{\cal B}}
\newcommand{\Ccal}{{\cal C}}
\newcommand{\Dcal}{{\cal D}}
\newcommand{\Ecal}{{\cal E}}
\newcommand{\Fcal}{{\cal F}}
\newcommand{\Gcal}{{\cal G}}
\newcommand{\Hcal}{{\cal H}}
\newcommand{\Ical}{{\cal I}}
\newcommand{\Ocal}{{\cal O}}
\newcommand{\Rcal}{{\cal R}}
\newcommand{\Zcal}{{\cal Z}}
\newcommand{\Xcal}{{\cal X}}
\newcommand{\zzbf}{{\bf 0}}
\newcommand{\zebf}{{\bf 0}}

\newcommand{\eop}{\hfill $\Box$}

\newcommand{\gss}{\mathop{}\limits}
\newcommand{\gs}{\mathop{\gss_<^>}\limits}

\newcommand{\circlambda}{\mbox{$\Lambda$
             \kern-.85em\raise1.5ex
             \hbox{$\scriptstyle{\circ}$}}\,}

\newcommand{\tr}{\mathop{\rm tr}}
\newcommand{\var}{\mathop{\rm var}}
\newcommand{\cov}{\mathop{\rm cov}}
\newcommand{\diag}{\mathop{\rm diag}}
\def\rank{\mathop{\rm rank}\nolimits}
\newcommand{\ra}{\rightarrow}
\newcommand{\ul}{\underline}
\def\Pr{\mathop{\rm Pr}}
\def\Re{\mathop{\rm Re}}
\def\Im{\mathop{\rm Im}}

\def\submbox#1{_{\mbox{\footnotesize #1}}}
\def\supmbox#1{^{\mbox{\footnotesize #1}}}

%
\newtheorem{Theorem}{Theorem}[section]
\newtheorem{Definition}{Definition}
\newtheorem{Proposition}{Proposition}
\newtheorem{Lemma}{Lemma}
\newtheorem{Corollary}{Corollary}
\newtheorem{Conjecture}[Theorem]{Conjecture}
\newtheorem{Property}{Property}

%
\newcommand{\ThmRef}[1]{\ref{thm:#1}}
\newcommand{\ThmLabel}[1]{\label{thm:#1}}
\newcommand{\DefRef}[1]{\ref{def:#1}}
\newcommand{\DefLabel}[1]{\label{def:#1}}
\newcommand{\PropRef}[1]{\ref{prop:#1}}
\newcommand{\PropLabel}[1]{\label{prop:#1}}
\newcommand{\LemRef}[1]{\ref{lem:#1}}
\newcommand{\LemLabel}[1]{\label{lem:#1}}
%

\newcommand \bbs{{\boldsymbol b}}
\newcommand \cbs{{\boldsymbol c}}
\newcommand \dbs{{\boldsymbol d}}
\newcommand \ebs{{\boldsymbol e}}
\newcommand \fbs{{\boldsymbol f}}
\newcommand \gbs{{\boldsymbol g}}
\newcommand \hbs{{\boldsymbol h}}
\newcommand \ibs{{\boldsymbol i}}
\newcommand \jbs{{\boldsymbol j}}
\newcommand \kbs{{\boldsymbol k}}
\newcommand \lbs{{\boldsymbol l}}
\newcommand \mbs{{\boldsymbol m}}
\newcommand \nbs{{\boldsymbol n}}
\newcommand \obs{{\boldsymbol o}}
\newcommand \pbs{{\boldsymbol p}}
\newcommand \qbs{{\boldsymbol q}}
\newcommand \rbs{{\boldsymbol r}}
\newcommand \sbs{{\boldsymbol s}}
\newcommand \tbs{{\boldsymbol t}}
\newcommand \ubs{{\boldsymbol u}}
\newcommand \vbs{{\boldsymbol v}}
\newcommand \wbs{{\boldsymbol w}}
\newcommand \xbs{{\boldsymbol x}}
\newcommand \ybs{{\boldsymbol y}}
\newcommand \zbs{{\boldsymbol z}}

\newcommand \Bbs{{\boldsymbol B}}
\newcommand \Cbs{{\boldsymbol C}}
\newcommand \Dbs{{\boldsymbol D}}
\newcommand \Ebs{{\boldsymbol E}}
\newcommand \Fbs{{\boldsymbol F}}
\newcommand \Gbs{{\boldsymbol G}}
\newcommand \Hbs{{\boldsymbol H}}
\newcommand \Ibs{{\boldsymbol I}}
\newcommand \Jbs{{\boldsymbol J}}
\newcommand \Kbs{{\boldsymbol K}}
\newcommand \Lbs{{\boldsymbol L}}
\newcommand \Mbs{{\boldsymbol M}}
\newcommand \Nbs{{\boldsymbol N}}
\newcommand \Obs{{\boldsymbol O}}
\newcommand \Pbs{{\boldsymbol P}}
\newcommand \Qbs{{\boldsymbol Q}}
\newcommand \Rbs{{\boldsymbol R}}
\newcommand \Sbs{{\boldsymbol S}}
\newcommand \Tbs{{\boldsymbol T}}
\newcommand \Ubs{{\boldsymbol U}}
\newcommand \Vbs{{\boldsymbol V}}
\newcommand \Wbs{{\boldsymbol W}}
\newcommand \Xbs{{\boldsymbol X}}
\newcommand \Ybs{{\boldsymbol Y}}
\newcommand \Zbs{{\boldsymbol Z}}

\newcommand \Absolute[1]{\left\lvert #1 \right\rvert}

\title{Golay Complementary Sequences of Arbitrary Length and Asymptotic Existence of Hadamard Matrices}

\author{Cheng Du, \textit{Student Member, IEEE,} \  Yi Jiang, \textit{Member, IEEE,}    
}
\maketitle
\blfootnote{The work was supported by National Natural Science Foundation of China Grant No. 61771005. ({\em Corresponding author: Yi Jiang}) 

C. Du and Y. Jiang are Key Laboratory for Information Science of Electromagnetic Waves (MoE), Department of Communication Science and Engineering, School of Information Science and Technology, Fudan University, Shanghai, China (E-mails: cdu15@fudan.edu.cn, yijiang@fudan.edu.cn).}
\begin{abstract}
In this work, we construct $4$-phase Golay complementary sequence (GCS) set of cardinality $2^{3+\lceil \log_2 r \rceil}$ with arbitrary sequence length $n$, where the $10^{13}$-base expansion of $n$ has $r$ nonzero digits. Specifically,  the GCS octets (eight sequences) cover all the lengths no greater than $10^{13}$. Besides, based on the representation theory of signed symmetric group, we construct Hadamard matrices from some special GCS to improve their asymptotic existence: there exist Hadamard matrices of order $2^t m$ for any odd number $m$, where $t = 6\lfloor \frac{1}{40}\log_{2}m\rfloor + 10$.
\end{abstract}

\begin{IEEEkeywords}  
Golay complementary sequence set, signed symmetric group, perfect sequences, Hadamard matrices
\end{IEEEkeywords}

%
\IEEEpeerreviewmaketitle

\section{Introduction}
The Golay complementary sequence (GCS) set is a set of $L$ sequences whose respective aperiodic autocorrelations add to be a scaled $\delta$-function \cite{tseng1972complementary}. This time-domain property has been utilized in radar ranging \cite{pezeshki2008doppler} and channel estimation \cite{spasojevic2001complementary}. In the frequency domain, their power spectrums add to be flat everywhere, which is useful for reducing the peak-to-average power ratio (PAPR) of OFDM system \cite{davis1999peak}\cite{pai2022two} and the omnidirectional precoding in massive MIMO scenarios \cite{li2021construction}\cite{girnyk2021efficient}. 

An important research direction of GCS for engineering purposes is to improve the answers to the following two questions: given $L$ the cadinality of GCS set, what lengths can be covered? for covering arbitrary length, how large should the cadinality be? For example, flexible lengths would accommodate for flexible antenna array sizes in omnidirectional beamforming \cite{li2021construction} and flexible numbers of subcarriers in OFDM \cite{chen2017novel}, and small cardinality facilitates the adoption of the orthogonal space time block code in omnidirectional beamforming \cite{li2021construction} and the PAPR reduction of OFDM signals \cite{davis1999peak}.

GCS pairs may be the most important family of GCS. For 2-phase GCS pairs with entries $\{1, -1\}$ \cite{golay1961complementary}, the known lengths (referred to as $2$-phase Golay numbers) are of form $2^a10^b26^c$ where $a, b, c\in {\mathbb Z^{+}} \cup \{ 0\}$ \cite{turyn1974hadamard, borwein2004complete}; for 4-phase GCS pairs with entries $\{1, -1, i, -i\}$, where $i$ is a primitive $4$-th root, the known lengths (referred to as $4$-phase Golay numbers) are of form $2^{a+u}3^{b}5^{c}11^{d}13^{e}$ where $a, b, c, d, e, u\in {\mathbb Z^{+}} \cup \{ 0\}$, $b+c+d+e \leq a+2u+1$, $u \leq c+e$ \cite{craigen2002complex}. These lengths are exponentially sparse. 

The rareness of the GCS pairs is due to their multiplicative construction: the lengths are the products of some integers. In comparison, \cite{craigen2002complex} proposed an additive construction of $4$-phase GCS quad (i.e., $L=4$): the lengths can be the summation of two $4$-phase Golay numbers to cover $827$ integers no greater than $1000$ \cite{craigen2002complex}. The additive construction was rediscovered in \cite{wang2017method, wang2019generalized}. Furthermore, it was generalized to a multiplicative and additive method in \cite{du2023polyphase} to cover more lengths of GCS quads. But none of these recursive constructions can cover arbitrary length to the best knowledge of us. 

Interestingly, some direct constructions can produce GCS (and its variants) of arbitrary length \cite{chen2017novel, ghosh2022direct}. Using the generalized Boolean function, \cite{chen2017novel} constructed a GCS set of cardinality $2^{k+1}$ to cover arbitrary length $n = 2^{m-1} + \sum_{\alpha = 1}^{k-1} d_\alpha 2^{\pi \left(m-k+\alpha\right) -1} + d_0 2^v$ where $k<m,\ 0\leq v \leq m-k,\ d_a \in \{0, 1\}$ and $\pi$ is a permutation of $\{1, 2, \cdots, m\}$ satisfying some constraints. \cite{ghosh2022direct} constructed the complete complementary codes (CCC) from the multivariable function, which can degenerate into a GCS with sequence length $n = \prod_{i=1}^\lambda n_i^{m_i}$ where $n_i$ is prime and $m_i\geq 0$, and the cardinality is $\prod_{i=1}^\lambda n_i$. 

Another motivation for studying GCS comes from the construction of Hadamard matrices. The $2$-phase GCS quads with sequence length $n$ can be used to construct Hadamard matrices of order $4n$ \cite{goethals1970skew}, and if the length of these sequences can be arbitrary, then the celebrated Hadamard conjecture would be proved \cite{seberry2020hadamard}. Besides, the $4$-phase GCS pairs and some special GCS quads can be utilized to improve the asymptotic existence of Hadamard matrices \cite{craigen1995signed, craigen1997asymptotic, livinskyi2012asymptotic}, which states that there exist Hadamard matrices of order $2^{t} m$ for any odd number $m$ and $t$ increases logarithmically with respect to $m$ \cite{seberry2020hadamard}. The first asymptotic result of $t=\lfloor 2\log_2(m-3)\rfloor +1, m>3$ was established in 1976 \cite{wallis1976existence}, then $t=4\lfloor \frac{1}{6}\log_2\frac{m-1}{2}\rfloor +6$ in 1995 \cite{craigen1995signed}, $t=4\lfloor \frac{1}{10}\log_2(m-1)\rfloor +6$ in 1997 \cite{craigen1997asymptotic}, and $t=6\lfloor \frac{1}{26}\log_2\frac{m-1}{2}\rfloor +11$ in 2012 \cite{livinskyi2012asymptotic}. \footnote{Another result of $t=6\lfloor \frac{1}{30}\log_2\frac{m-1}{2}\rfloor +13$ in \cite{livinskyi2012asymptotic} is problematic since it relies on \cite[Theorem 4]{koukouvinos1991addendum}, which is incorrect.} 

In this paper, we focus on the construction of $4$-phase GCS sets and Hadamard matrices. Our contributions are two-fold. First, we propose a multiplicative and additive construction of GCS set, which can cover arbitrary lengths with much smaller cardinality than the direct constructions. Specifically, the $4$-phase GCS octet (i.e., $L = 8$) covers all the lengths no greater than $10^{13}$. Second, by using the matrix representation of signed symmetric group \cite{craigen1995signed}, we propose a new construction of Hadamard matrices from $4$-phase GCS pairs and some special $4$-phase GCS quads, which leads to an improved asymptotic existence of Hadamard matrices: there exist Hadamard matrices of order $2^t m$ for any odd number $m$, where $t = 6\lfloor \frac{1}{40}\log_{2}m\rfloor + 10$.



Notations: for a sequence, $(\cdot)^*$ represents flipping and conjugating the sequence, $\underline{(\cdot)}$ and $\overline{(\cdot)}$ represent negating and conjugating the sequence, respectively; for a complex value, $\overline{(\cdot)}$ represents the complex conjugate. $\otimes$ represents the Kronecker product, $\vert$ represents concatenating two sequences, and $\otimes$ has higher precedence than $\vert$. For two sets of integers $\mathcal{S}$ and $\mathcal{T}$, $\mathcal{S} + \mathcal{T} \triangleq \{s+t\,\vert\,s\in \mathcal{S}, t\in \mathcal{T}\}$, $\mathcal{S} \cdot \mathcal{T} \triangleq \{st\,\vert\,s\in \mathcal{S}, t\in \mathcal{T}\}$ and $\mathcal{S}^2 = \mathcal{S} \cdot \mathcal{S}$. The number of elements in a set $\mathcal{S}$ is denoted by $\abs{\mathcal{S}}$. $\lfloor \cdot \rfloor$ represents taking the floor of a number. $\Znum_n \triangleq \{0, 1, \cdots, n-1\}$. $\Cnum$ is the field of complex numbers. $\text{lcm}(a, b, \cdots)$ stands for the least common multiple of integers $a, b, c,\dots$.

\section{Preliminaries} \label{SEC:pre}
In this section, we first introduce the background of the GCS set in Section \ref{sec:pre.gcs}, which is prepared for the construction of GCS set of arbitrary length in Section \ref{SEC:gcs}; we then explain the concept of perfect sequences over a signed symmetric group in Section \ref{sec:pre.spv}, which is helpful for improving the asymptotic existence of Hadamard matrices in Section \ref{sec:hadamard}.

\subsection{Golay Complementary Sequence Set} \label{sec:pre.gcs}
For two complex-valued sequences $\abf = [a_0, a_1, \cdots, a_{n-1}]$ and $\bbf = [b_0, b_1, \cdots, b_{n-1}]$, their aperiodic cross-correlation is defined as
\ben \label{eq:acorr}
R_{ab}(\tau) = \sum_{i}a_i \overline{b}_{i-\tau}, \quad 1-n\leq \tau \leq n-1,
\een
where $a_i=b_i=0$ if $i < 0$ or $i \geq n$, and the overbar represents the complex conjugation. When $\abf = \bbf$, $R_{ab}(\tau)$ is abbreviated as the aperiodic auotcorrelation $R_a(\tau)$.

Similarly, the periodic autocorrelation of $\abf$ is defined as: 
\ben \label{eq:pcorr}
C_{a}(\tau) = \sum_{i=0}^{n-1}a_{(i)} \overline{a}_{(i-\tau)}, \quad \tau \in \Znum_n
\een
where the subscript $(\cdot)$ represents the index modulo $n$. $C_{a}(\tau)$ is related to $R_{a}(\tau)$ by the following equation:
\ben \label{eq:ac_and_pc}
C_{a}(\tau) = R_{a}(\tau) + R_{a}(\tau-n).
\een

Denote by $\abf^*$ the flipped and conjugate version of $\abf$, i.e., $\abf^* = \left[\overline{a}_{n-1}, \cdots, \overline{a}_1, \overline{a}_0 \right]$. Then by the commutativity of multiplication of complex numbers, we have
\ben \label{eq:complex_normal}
R_{a^*}(\tau) = R_{a}(\tau).
\een

For a complex sequence $\abf$ of length $n$, define the polynomials 
\ben \label{eq:seq_pol}
a(z) \triangleq \sum_{i=0}^{n-1} a_i z^i,\quad a^*(z) \triangleq  \sum_{i=0}^{n-1} \overline{a}_{n-1-i} z^i,
\een
\ben \label{eq:corr_pol}
R_a(z)  \triangleq   \sum_{\tau=1-n}^{n-1} R_a(\tau) z^{\tau}.
\een
It is straightforward to verify that
\ben \label{eq:norm_and_corr}
a(z)a^*(z) = R_a(z)z^{n-1}.
\een 

\begin{Definition} \label{def:GCS}
Define the weight of a sequence $\abf$ of length $n$ as $w(\abf) \triangleq \sum_{i=0}^{n-1}a_i\overline{a}_i$. A set of sequences $\{\abf_1, \abf_2, \cdots, \abf_L\}$ with unimodular or zero entries and of lengths $n_1, n_2, \cdots,$ $n_L$ is called a Golay complementary sequence (GCS) set of cardinality $L$ if
\ben \label{ACondition}
\sum_{l=1}^{L}R_{a_l}(\tau) = \begin{cases}
    \sum_{l=1}^{L} w(\abf_l), &\tau=0\\
    0, & \tau \neq 0
\end{cases}.
\een
Let $n \triangleq \max{\{n_1, n_2, \cdots, n_L\}}$, and append zeroes to the sequences $\abf_1, \abf_2, \cdots, \abf_L$ to obtain $\abf^{\prime}_1, \abf^{\prime}_2, \cdots, \abf^{\prime}_L$ of equal length $n$, respectively. Then \eqref{ACondition} is equivalent to 
\ben \label{PCondition}
\sum_{l=1}^{L}a^{\prime}_l(z){a^{\prime}_l}^{*}(z) = \sum_{l=1}^{L} w(\abf_l) z^{n-1}.
\een
\end{Definition}
 Particularly, the GCS sets of cardinalities $L=2, 4, 8$ are referred to as GCS {\em pair}, GCS {\em quad}, and GCS {\em octet}, respectively. A sequence with entries restricted to the $M$-th unit roots is referred to as an $M$-phase sequence. For example, a 2-phase sequence has entries $\{1, -1\}$ and a 4-phase sequence has entries $\{1, -1, i, -i\}$ where $i=\sqrt{-1}$. A polyphase GCS pair must have equal sequence length, while for polyphase GCS sets of cardinality greater than $2$, the sequence lengths may be different.
 
 The following examples of GCS pairs are taken from \cite{golay1961complementary, craigen2002complex}, where the superscripts represent the sequence lengths.:
\ben \label{gcs:2}
\abf_1^{(2)} = [1, 1], \quad \abf_2^{(2)} = [1, -1];
\een
\bea \label{gcs:10}
\abf_1^{(10)} &= [1, -1, -1, 1, -1, 1, -1, -1, -1, 1], \\ \abf_2^{(10)} &= [1, -1, -1, -1, -1, -1, -1, 1, 1, -1];
\eea
\bea \label{gcs:26}
\abf_1^{(26)} = &[-1,1,-1,-1,1,1,-1,1,1,1,1,-1,-1,-1,\\
&-1,-1,-1,-1,1,1,-1,-1,-1,1,-1,1], \\
\abf_2^{(26)} = &[-1,1,-1,-1,1,1,-1,1,1,1,1,-1,1,\\
&-1,1,1,1,1,-1,-1,1,1,1,-1,1,-1];
\eea
\ben \label{gcs:3}
\abf_1^{(3)} = [1, 1, -1], \quad \abf_2^{(3)} = [1, i, 1];
\een
\ben \label{gcs:5}
\abf_1^{(5)} = [i, i, 1, -1, 1], \quad \abf_2^{(5)} = [i, 1, 1, i, -1];
\een
\bea \label{gcs:11}
\abf_1^{(11)} &= [1, i, -1, 1, -1, i, -i, -1, i, i, 1], \\ \abf_2^{(11)} &= [1, 1, -i, -i, -i, 1, 1, i, -1, 1, -1];
\eea
\bea \label{gcs:13}
\abf_1^{(13)} &= [1, 1, 1, i, -1, 1, 1, -i, 1, -1, 1, -i, i], \\ \abf_2^{(13)} &= [1, i, -1, -1, -1, i, -1, 1, 1, -i, -1, 1, -i];
\eea

Note that $\{\abf_1^{(1)}=1, \abf_2^{(1)} = 1\}$ is also a GCS pair. It is referred to as the trivial GCS pair while the others are referred to as the nontrivial GCS pairs.

The following recursive construction of polyphase GCS pair was due to Craigen \cite{craigen2002complex}.
\begin{Proposition} [\cite{craigen2002complex}] \label{prop:quater_seq}
    Given a nontrivial 2-phase GCS pair $\{\abf, \bbf\}$ with sequence length $s$, two polyphase GCS pairs $\{\cbf, \dbf\}$ and $\{\ebf, \fbf\}$ with sequence lengths $t$ and $u$ respectively, and the recursive compositions
    \bea \label{eq.4phaseGolay}
    \pbf &= \frac{1}{4} {\left[\abf+\bbf+\left(\bbf^*-\abf^*\right)\right]},\
    \qbf = \frac{1}{4} {\left[\abf+\bbf-\left(\bbf^*-\abf^*\right)\right]},\\
    \xbf &= \pbf \otimes \cbf + \qbf \otimes \dbf,\quad
    \ybf = \qbf^* \otimes \cbf - \pbf^* \otimes \dbf,\\
    \gbf &= \xbf \otimes \ebf + \ybf \otimes \fbf,\quad
    \hbf = \ybf^* \otimes \ebf - \xbf^* \otimes \fbf,
    \eea
    the so-obtained $\{\gbf, \hbf\}$ is a polyphase GCS pair of length $stu$.
  \end{Proposition}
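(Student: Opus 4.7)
The strategy is to split the claim into (i) a correlation-sum identity, handled algebraically in the polynomial formalism of \eqref{eq:seq_pol}--\eqref{eq:norm_and_corr}, and (ii) a unimodularity check for $\gbf,\hbf$, handled combinatorially by tracking supports. The construction \eqref{eq.4phaseGolay} is an iteration of one Turyn-style doubling step---first from $\{\pbf,\qbf\}$ and $\{\cbf,\dbf\}$ to $\{\xbf,\ybf\}$, then from $\{\xbf,\ybf\}$ and $\{\ebf,\fbf\}$ to $\{\gbf,\hbf\}$---so both tasks factor cleanly into two parallel stages.

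For (i), I would first prove the base identity
\[
p(z)p^*(z)+q(z)q^*(z) \;=\; \tfrac{1}{4}\bigl[a(z)a^*(z)+b(z)b^*(z)\bigr] \;=\; \tfrac{s}{2}\,z^{s-1}
\]
by direct expansion: within each of $p(z)p^*(z)$ and $q(z)q^*(z)$ the mixed cross terms $a(z)b(z),a(z)b^*(z),a^*(z)b(z),a^*(z)b^*(z)$ cancel out, and between the two expressions the squared terms $a(z)^2,b(z)^2,a^*(z)^2,b^*(z)^2$ cancel as well; the GCS property of $\{\abf,\bbf\}$ gives $a(z)a^*(z)+b(z)b^*(z)=2s\,z^{s-1}$. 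Next I would establish a generic doubling lemma: if $\sbf,\tbf$ of common length $n$ satisfy $s(z)s^*(z)+t(z)t^*(z)=C\,z^{n-1}$ and $\{\ebf,\fbf\}$ is a GCS pair of length $u$, then $\sbf\otimes\ebf+\tbf\otimes\fbf$ and $\tbf^*\otimes\ebf-\sbf^*\otimes\fbf$ satisfy the analogous identity with constant $C(w(\ebf)+w(\fbf))$ and exponent $nu-1$. The proof is a one-line calculation in $\Cnum[z]$: the Kronecker product acts as $z\mapsto z^u$ in the first factor, commutativity matches the cross terms in $gg^*$ with those in $hh^*$ with opposite signs so they cancel, and what remains factors as $\bigl(s(z^u)s^*(z^u)+t(z^u)t^*(z^u)\bigr)\bigl(e(z)e^*(z)+f(z)f^*(z)\bigr)$. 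Applying this first with $(\pbf,\qbf,\cbf,\dbf)$ and then with $(\xbf,\ybf,\ebf,\fbf)$ produces $g(z)g^*(z)+h(z)h^*(z)=(\mathrm{const})\,z^{stu-1}$, as required.

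For (ii), the delicate point is that $\pbf,\qbf$ can have entries in $\{0,\pm\tfrac{1}{2},\pm 1\}$, so unimodularity is not guaranteed by the polynomial identity alone. The key structural input is an anti-palindromic property of nontrivial $2$-phase GCS pairs: the product sequence $r_i\triangleq a_ib_i$ satisfies $r_{s-1-i}=-r_i$; equivalently, writing $U\triangleq\{i:a_i=b_i\}$, one has $s-1-U=U^c$. This is a known structural feature that is directly verified on the base pairs \eqref{gcs:2},\eqref{gcs:10},\eqref{gcs:26} and is preserved under the standard multiplicative constructions. Granted this, a case split on $I\in U$ versus $I\in U^c$ shows that at each outer index exactly one of $p_I,q_I,p_{s-1-I},q_{s-1-I}$ equals $\pm 1$ while the other three vanish; consequently the ``blockwise'' supports of $\xbf_{It+J}$ and $\ybf_{It+J}$ (with respect to the outer coordinate $I$) are complementary, and the nonzero one equals $\pm c_J$ or $\pm d_J$. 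Feeding this disjointness into the second Kronecker product collapses each entry of $\gbf$ (and by symmetry $\hbf$) to a single term of the form $\pm c_J e_K,\pm d_J e_K,\pm c_J f_K$, or $\mp d_J f_K$---a product of unimodular scalars. The main obstacle is therefore not the correlation identity (which is algebraic bookkeeping) but establishing the anti-palindromic property and tracking how the half-integer entries of $\pbf,\qbf$ get annihilated through two rounds of Kronecker products; without it the correlation identity would still hold but the resulting sequences need not be unimodular.
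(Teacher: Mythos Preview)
The paper does not supply a proof of Proposition~\ref{prop:quater_seq}; the result is quoted verbatim from \cite{craigen2002complex} and used as a black box. So there is no in-paper argument to compare against, and your proposal has to be assessed on its own.

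Your two-stage plan is sound and essentially complete. Part~(i) goes through exactly as you describe: writing $p=\tfrac14[(b+b^{*})+(a-a^{*})]$ and $q=\tfrac14[(a+a^{*})+(b-b^{*})]$ and applying the difference-of-squares identity gives $pp^{*}+qq^{*}=\tfrac14(aa^{*}+bb^{*})=\tfrac{s}{2}z^{s-1}$, and your ``doubling lemma'' is precisely the $L=M=1$ case of Lemma~\ref{lem:compromise} (cross terms cancel by commutativity of $\Cnum[z]$). Two applications then yield $gg^{*}+hh^{*}=2stu\,z^{stu-1}$. Part~(ii) is also correct: under the anti-palindromic identity $a_ib_i=-a_{s-1-i}b_{s-1-i}$ one checks directly that $(p_I,q_I)\in\{(\pm1,0),(0,\pm1)\}$ on $U$ and $(p_I,q_I)=(0,0)$ on $U^{c}$; since $s-1-U=U^{c}$, the outer supports of $\xbf$ and $\ybf$ are complementary with unimodular nonzero entries, and this disjointness survives the second Kronecker step to make every entry of $\gbf,\hbf$ a single product of roots of unity.

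The one soft spot is your justification of the anti-palindromic property. Saying it ``is directly verified on the base pairs \eqref{gcs:2},\eqref{gcs:10},\eqref{gcs:26} and is preserved under the standard multiplicative constructions'' does not cover an \emph{arbitrary} nontrivial $2$-phase GCS pair, which is what the hypothesis allows. The property does in fact hold for every such pair (it goes back to Golay's 1961 paper and is re-derived in later work on binary Golay pairs), so the gap is in the citation rather than the mathematics; you should either invoke it as a theorem with a reference or include the short argument.
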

  The lengths of 2-phase (4-phase) GCS pairs are referred to as 2-phase ($4$-phase) Golay numbers. Proposition \ref{prop:quater_seq} essentially states that if $s$ is a non-trivial 2-phase Golay number and $t, u$ are two 4-phase Golay numbers, then $stu$ is also a 4-phase Golay number, which leads to the following existence of 4-phase GCS pairs:
  
\begin{Corollary} [\cite{craigen2002complex}] \label{Qsize}
There exist 4-phase GCS pairs if the sequence length
\ben 
\begin{split}
 n \in {\cal G}_{4p} \triangleq &   \left\{ 2^{a+u}3^{b}5^{c}11^{d}13^{e} | a, b, c, d, e, u \in {\mathbb Z}^+ \cup \{0\},  \right.\\
 &\left.\quad   b+c+d+e \leq a+2u+1, u \leq c+e \right\}.
\end{split}
\een 
\end{Corollary}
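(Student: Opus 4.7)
I would obtain the corollary by iterating the recursive construction of Proposition \ref{prop:quater_seq}, factoring $n$ as a product of \emph{base} Golay numbers (namely $2,3,5,10,11,13,26$, for which explicit GCS pairs are given in \eqref{gcs:2}--\eqref{gcs:13}) and organizing those factors as a binary tree whose bottom-up evaluation produces a 4-phase GCS pair of length $n$. Given $n=2^{a+u}3^{b}5^{c}11^{d}13^{e}$ with the stated constraints, the first step is to pick nonnegative integers $u_{5}\leq c$ and $u_{13}\leq e$ with $u_{5}+u_{13}=u$, which is possible precisely because $u\leq c+e$, and then to write
\[
n\;=\;\underbrace{2\cdots 2}_{a}\,\cdot\,\underbrace{10\cdots 10}_{u_{5}}\,\cdot\,\underbrace{26\cdots 26}_{u_{13}}\,\cdot\,\underbrace{3\cdots 3}_{b}\,\cdot\,\underbrace{5\cdots 5}_{c-u_{5}}\,\cdot\,\underbrace{11\cdots 11}_{d}\,\cdot\,\underbrace{13\cdots 13}_{e-u_{13}}.
\]
Factors drawn from $\{2,10,26\}$ are non-trivial 2-phase Golay lengths, each suitable as the input $s$ in Proposition \ref{prop:quater_seq}, while those from $\{3,5,11,13\}$ are 4-phase Golay lengths, suitable as either of the two polyphase inputs.

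The next step is to count and arrange: there are $K\triangleq a+u$ factors of the 2-phase type and $M\triangleq b+c+d+e-u$ factors of the 4-phase type, and the hypothesis $b+c+d+e\leq a+2u+1$ rearranges exactly to $M\leq K+1$. Since a full binary tree with $K$ internal nodes has exactly $K+1$ leaves, I would build such a tree by assigning the $K$ 2-phase factors to the internal nodes in any order, placing the $M$ 4-phase factors on any $M$ of the leaves, and filling the $K+1-M$ remaining leaves with the trivial GCS pair $\{1,1\}$. Processing the tree bottom-up, by applying Proposition \ref{prop:quater_seq} at each internal node with $\{\abf,\bbf\}$ taken to be the 2-phase GCS pair attached to the node's label and $\{\cbf,\dbf\},\{\ebf,\fbf\}$ taken to be the two GCS pairs returned by the node's children, yields at the root a 4-phase GCS pair whose length equals the product of all labels in the tree, which is exactly $n$.

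The argument is combinatorial rather than analytic, so I do not anticipate any deep obstacle. The only conceptual point to verify is that the corollary's two hypotheses are precisely the conditions needed for the tree construction to close: the inequality $u\leq c+e$ is what allows the split $u=u_{5}+u_{13}$ with $u_{5}\leq c$ and $u_{13}\leq e$, and the inequality $b+c+d+e\leq a+2u+1$ is what guarantees that the count $M$ of 4-phase ingredients does not exceed the $K+1$ available leaf slots. Once these two bookkeeping facts are in place, the induction up the tree is immediate from Proposition \ref{prop:quater_seq}, and the desired 4-phase GCS pair of length $n$ is produced at the root.
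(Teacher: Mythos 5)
Your proof is correct and follows essentially the same route as the paper, which states Corollary \ref{Qsize} as an immediate consequence of iterating Proposition \ref{prop:quater_seq} (citing \cite{craigen2002complex}) without writing out the details. Your binary-tree bookkeeping — splitting $u=u_5+u_{13}$ via $u\leq c+e$ and matching the $M=b+c+d+e-u$ quaternary factors to the $K+1=a+u+1$ leaves via $b+c+d+e\leq a+2u+1$ — is exactly the intended elaboration.
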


We refer to ${\cal G}_{4p}$ as the set of 4-phase Golay numbers in this paper. 

The following corollary is easy to validate based on Corollary 1.

\begin{Corollary} \label{coro:product_golay}
    The product of $k$ $4$-phase Golay numbers is of form $2^{a+u}3^{b}5^{c}11^{d}13^{e}$, where $a, b, c, d, e, u \in {\mathbb Z}^+ \cup \{0\}, b+c+d+e \leq a+2u+k, u \leq c+e$.
\end{Corollary}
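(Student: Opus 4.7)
The plan is to reduce the statement to Corollary \ref{Qsize} by induction on $k$ (or equivalently, by summing exponents componentwise). The base case $k=1$ is exactly Corollary \ref{Qsize}. For the inductive step, observe that the constraint defining ${\cal G}_{4p}$ decomposes into two linear inequalities in the exponents, both of which are preserved under addition of exponent vectors.

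More concretely, first I would write each of the $k$ 4-phase Golay numbers in the canonical form $n_i = 2^{a_i+u_i}3^{b_i}5^{c_i}11^{d_i}13^{e_i}$ with
\[
b_i+c_i+d_i+e_i \leq a_i+2u_i+1, \qquad u_i \leq c_i+e_i, \qquad i=1,\dots,k.
\]
Then their product is
\[
\prod_{i=1}^{k} n_i \;=\; 2^{A+U}\,3^{B}\,5^{C}\,11^{D}\,13^{E},
\]
where $A=\sum_i a_i$, $U=\sum_i u_i$, $B=\sum_i b_i$, $C=\sum_i c_i$, $D=\sum_i d_i$, $E=\sum_i e_i$, and all these sums are nonnegative integers.

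Next I would simply sum the two families of inequalities over $i=1,\dots,k$: summing the first constraint yields
\[
B+C+D+E \;\leq\; A+2U+k,
\]
and summing the second gives $U\leq C+E$. These are exactly the constraints claimed in the corollary, so $\prod_i n_i$ has the required form.

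I do not foresee a genuine obstacle: the whole argument is just the additive closure of two linear inequalities under multiplication of the $n_i$'s. The only point that deserves a sentence of justification is that the exponent decomposition in Corollary \ref{Qsize} is not unique (the splitting of the power of $2$ into $a+u$ has some freedom), but this is harmless because we only need the \emph{existence} of one valid choice of $(A,U,B,C,D,E)$ satisfying the inequalities, which the componentwise sums provide.
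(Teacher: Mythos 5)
Your argument is correct and is precisely the componentwise summation of exponents and constraints that the paper has in mind when it states the corollary is ``easy to validate based on Corollary 1''; the paper gives no further proof. Your closing remark about only needing the existence of one valid exponent decomposition is also the right way to handle the non-uniqueness of the $a+u$ splitting.
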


\begin{remark}
    Corollary \ref{coro:product_golay} states that the constraints upon the exponents of the products of $4$-phase Golay numbers are more relaxed than those of a 4-phase Golay number. In other words, ${\cal G}_{4p} \subsetneq {\cal G}_{4p} \cdot {\cal G}_{4p} \subsetneq {\cal G}_{4p}\cdot {\cal G}_{4p} \cdot {\cal G}_{4p} \subsetneq \ldots$. 
    For example, although $9$ is not a $4$-phase Golay number, we have $9=3\times 3$ as a product of two $4$-phase Golay number. This observation would result in more flexible lengths in this work.
\end{remark}

\begin{Definition} \label{def:CBS}
A 4-phase GCS quad $\{\abf_1, \abf_2, \abf_3, \abf_4\}$, with $\abf_1, \abf_2$ of length $n_1=n_2=s_1$ and $\abf_3, \abf_4$ of length $n_3=n_4=s_2$ ($s_1$ and $s_2$ may be different), is referred to as a complex base sequence (CBS) and is denoted by $CBS(s_1, s_2)$.
\end{Definition}
Obviously, $\{\abf_1, \abf_2, \abf_3, \abf_4\}$ is a CBS if $\{\abf_1, \abf_2\}$ and $\{\abf_3, \abf_4\}$ are two 4-phase GCS pairs. 
There also exist $CBS(s_2+1, s_2)$ if 
\cite{ craigen2002complex, djokovic2010base} 
 \ben \label{eq:B}
 2s_2+1 \in \mathcal{B} \triangleq \{2b+1\, \vert \, 1\leq b \leq 38\} \cup \left(\{2\}\cdot\mathcal{S}_1+\{1\}\right),
 \een
  where 
 \bea
 \mathcal{S}_1 \triangleq \{0\}\cup {\cal G}_{4p}. 
 \eea
E.g., the following $\{\abf_1^{(8)}, \abf_2^{(8)}, \abf_3^{(7)}, \abf_4^{(7)}\}$ is a $CBS(8, 7)$:
\bea \label{eq:cbs87}
\abf_1^{(8)} &= [-1,1,1,1,1,1,-1,1], \\
\abf_2^{(8)} &= [1,1,1,-1,-1,1,-1,1],\\
\abf_3^{(7)} &= [-1,1,1,-1,1,1,1], \\
\abf_4^{(7)} &= [1,-1,1,1,1,-1,-1],
\eea 
and $\{\abf\vert 1, \abf\vert -1, \bbf, \bbf\}$ is a $CBS(g+1, g)$ if $\{\abf, \bbf\}$ is a $4$-phase GCS pair of length $g$.

\subsection{Perfect Sequences over Signed Symmetric Group} \label{sec:pre.spv}
The notion of signed symmetric group was introduced in \cite{craigen1995signed} to improve the asymptotic existence of Hadamard matrices.

\begin{Definition}
    A signed symmetric group denoted by $SP_v$ is a group of order $2^v v!$, isomorphic to the multiplicative group of all $v\times v$ signed permutation matrices, which have exactly one nonzero entry $\pm 1$ in each row and each column.
\end{Definition}

For example, $SP_2 = \langle i, j \vert i^2=-1, j^2=1, ij=-ji \rangle = \{\pm 1, \pm i, \pm j, \pm ij\}$ is a signed symmetric group of order $8$, identified with the group of all $2\times 2$ signed permutation matrices by the following isomorphism:
\ben \label{eq:isomorphism}
1 \mapsto \begin{bmatrix}
    1 & 0 \\ 0 & 1
\end{bmatrix}, \quad
i \mapsto \begin{bmatrix}
    0 & -1 \\ 1 & 0
\end{bmatrix}, \quad
j \mapsto \begin{bmatrix}
    1 & 0 \\ 0 & -1
\end{bmatrix}.
\een

Note that $S_{\Cnum} \triangleq \langle i\vert i^2=-1 \rangle = \{\pm 1, \pm i\}$ is an abelian subgroup of $SP_2$, and the field of complex numbers $\Cnum $ can be viewed as a group ring $\Rnum[S_{\Cnum}] \triangleq \{x_1 \cdot 1 + x_2 \cdot -1 + x_3 \cdot i + x_4 \cdot -i\, \vert\, x_1, x_2, x_3, x_4 \in \Rnum\}$ with the complex conjugation identified with the transpose of the $2\times 2$ matrix representation, i.e., 
\ben
\overline{i} = -i\ \mapsto\ \begin{bmatrix}
    0 & -1 \\ 1 & 0
\end{bmatrix}^T = \begin{bmatrix}
    0 & 1 \\ -1 & 0
\end{bmatrix}.
\een

Based on this observation, for sequences with entries in the signed symmetric group ring $\Rnum[SP_v]$, it is natural to define their correlations and polynomials as in \eqref{eq:acorr}, \eqref{eq:pcorr}, \eqref{eq:seq_pol} and \eqref{eq:corr_pol}, with the complex conjugation generalized to the matrix transpose, e.g., in $SP_2$, $\overline{i} = -i$ and $\overline{j} = j$. 

One may verify that for sequences over $\Rnum[SP_v]$ the properties \eqref{eq:ac_and_pc} and \eqref{eq:norm_and_corr} still hold. But \eqref{eq:complex_normal} may fail, since $R_{a}(\tau) = \sum_i a_{i}\overline{a}_{i-\tau} $ and $
R_{a^*}(\tau) = \sum_i \overline{a}_{i-\tau} a_{i}$ but
$a_{i}\overline{a}_{i-\tau} \ne  \overline{a}_{i-\tau}a_{i}$ in general because the multiplication of two signed permutation matrices is not necessarily commutative.

Define the support of a sequence $\abf$ of length $n$ as $supp(\abf) \triangleq \{i \, \vert\, a_i \neq 0, i \in \Znum_n\}$. A series of sequences are referred to be disjoint if the supports of any two sequences are disjoint, and to be supplementary if they are disjoint and the union of their supports is $\Znum_n$, e.g., $\abf = [1, 0, -1, 0]$, $\bbf = [0, i, 0, 0]$ and $\cbf = [0, 0, 0, j]$ are supplementary. A sequence $\abf$ is referred to be quasi-symmetric if its support is symmetric, i.e.,  $i \in supp(\abf) \Leftrightarrow n-1-i \in supp(\abf)$, e.g., $\abf = [1, 0, 0, i, 0, 0, j]$ is quasi-symmetric.

For $a, b\in \Rnum[SP_v]$, we can use them to construct an element in $\Rnum[SP_{2v}]$ denoted by
\ben
\begin{bmatrix}
    \pm a & 0\\
    0 & \pm b
\end{bmatrix}\ or\ 
\begin{bmatrix}
    0& \pm a \\
    \pm b & 0
\end{bmatrix},
\een
which is identified with the matrix of order $2v$ obtained by replacing $a, b$ with their matrix representations and replacing $0$ with a $v\times v$ zero matrix, e.g., 
\ben
\begin{bmatrix}
    i & 0 \\ 0 & -j
\end{bmatrix} \mapsto \begin{bmatrix}
    0 & -1& 0 &0 \\
    1 & 0& 0 &0 \\
    0 & 0& -1 &0 \\
    0 & 0 & 0 & 1
\end{bmatrix}.
\een
And we can embed $\Rnum[SP_v]$ into $\Rnum[SP_{2v}]$ by the following identification:
\ben \label{eq:embedding}
\begin{bmatrix}
    a & 0\\
    0 & a
\end{bmatrix} = a
\begin{bmatrix}
    1 & 0\\
    0 & 1
\end{bmatrix} = a,\ \text{with}\ a \in SP_v, \begin{bmatrix}
    a & 0\\
    0 & a
\end{bmatrix} \in SP_{2v},
\een
which is important and would be used extensively in Section \ref{sec:hadamard}.

Finally, we give the definition of perfect sequence with entries in a signed symmetric group.
\begin{Definition}
    A sequence $\abf = [a_0, a_1, \cdots, a_{n-1}]$ with entries in $SP_v$ is called a perfect sequence over $SP_v$ if its periodic autocorrelation is a scaled delta function, i.e., 
    \ben
        C_{a}(\tau) = \begin{cases}
        n, &\tau=0\\
        0, & \tau \neq 0
    \end{cases}.
    \een
\end{Definition}
This definition turns out to be important for the construction of block-circulant Hadamard matrices in Section \ref{sec:hadamard}.

\section{Construction of GCS of Arbitrary Length} \label{SEC:gcs} 
In this section, we present a new construction of $4$-phase GCS sets of cardinality $2^{3+\lceil \log_2 r \rceil}$ with arbitrary sequence length $n$, where $r$ is the number of nonzero digits of the $10^{13}$-base expansion of $n$.

The construction is based on the following theorem, which obtains new GCS sets via multiplying and adding up the lengths of some known GCS sets.

\begin{theorem} \label{thm:compromise}
    Given a polyphase GCS set $\{\abf_1, \abf_2, \cdots, \abf_{2L}\}$ where the length of each sequence is $s$, and a polyphase GCS set $\{\bbf_1, \bbf_2, \cdots, \bbf_{2M}\}$ where the length of $\bbf_{2m-1}$ is $t$ while the length of $\bbf_{2m}$ is $u$ for $m=1, 2, \cdots, M$. Suppose 
    \ben \label{concate_zeros}
    \bbf_{2m-1}^{\prime} = \bbf_{2m-1}\,\vert\,{\bf{0}}^{(u)}, \quad
    \bbf_{2m}^{\prime} = {\bf{0}}^{(t)}\,\vert\,\bbf_{2m},
    \een
    where $\vert$ represents concatenating two sequences and ${\bf{0}}^{(u)}$ $({\bf{0}}^{(t)})$ represents the zero sequence of length $u\ (t)$. For $1\leq l \leq L$, $1\leq m\leq M$, let
    \bea \label{compromise}
    \cbf_{l, m} &= \abf_{2l-1} \otimes \bbf_{2m-1}^{\prime} + \abf_{2l} \otimes \bbf_{2m}^{\prime},\\
    \dbf_{l, m} &= \abf_{2l}^* \otimes \bbf_{2m-1}^{\prime} - \abf_{2l-1}^* \otimes \bbf_{2m}^{\prime}.
    \eea
    then $\{\cbf_{l, m}, \dbf_{l, m} \ \vert \ 1\leq l\leq L, 1\leq m\leq M\}$ is a polyphase GCS set of cardinality $2LM$, and the length of each sequence is $s(t+u)$.
\end{theorem}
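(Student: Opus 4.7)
My plan is to work in the polynomial domain, translating the autocorrelation condition \eqref{ACondition} into its equivalent product form \eqref{PCondition} via $p(z)p^*(z) = R_p(z)z^{N-1}$, and then verifying the latter by a direct algebraic expansion whose cross-terms cancel and whose remainder factorizes as an independent $l$-sum times an $m$-sum.

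Set $n' = t+u$. The zero-paddings in \eqref{concate_zeros} translate to $b'_{2m-1}(z) = b_{2m-1}(z)$ and $b'_{2m}(z) = z^t b_{2m}(z)$, while the Kronecker-product convention makes the polynomial of $\abf \otimes \bbf$ equal to $a(z^{N_b}) \cdot b(z)$, where $N_b$ is the length of $\bbf$. Consequently,
\bea
c_{l,m}(z) &= a_{2l-1}(z^{n'})\, b_{2m-1}(z) + z^t\, a_{2l}(z^{n'})\, b_{2m}(z), \\
d_{l,m}(z) &= a_{2l}^*(z^{n'})\, b_{2m-1}(z) - z^t\, a_{2l-1}^*(z^{n'})\, b_{2m}(z),
\eea
each of total length $s n' = s(t+u)$.

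The most delicate bookkeeping step is the star operation. Using $p^*(z) = z^{N-1}\tilde{p}(z^{-1})$ (where $\tilde{p}$ conjugates coefficients) on the full length-$sn'$ polynomial, and carefully tracking the $z^t$ shift against the total length, one obtains
\bea
c_{l,m}^*(z) &= z^{u}\, a_{2l-1}^*(z^{n'})\, b_{2m-1}^*(z) + a_{2l}^*(z^{n'})\, b_{2m}^*(z), \\
d_{l,m}^*(z) &= z^{u}\, a_{2l}(z^{n'})\, b_{2m-1}^*(z) - a_{2l-1}(z^{n'})\, b_{2m}^*(z),
\eea
the asymmetric $z^{u}$ on the first summand reflecting the asymmetric zero-padding of $\bbf'_{2m-1}$ versus $\bbf'_{2m}$. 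With these in hand, I expand $c_{l,m}c_{l,m}^* + d_{l,m}d_{l,m}^*$: each product produces two diagonal and two mixed terms, and the mixed terms (those involving $a_{2l-1}a_{2l}^*$-type products) appear with opposite signs in $cc^*$ versus $dd^*$ thanks to the sign choice in the definition of $d_{l,m}$, so they cancel pairwise. What remains factors as
\[
\bigl(a_{2l-1}(z^{n'})a_{2l-1}^*(z^{n'}) + a_{2l}(z^{n'})a_{2l}^*(z^{n'})\bigr)\cdot\bigl(z^u b_{2m-1}(z) b_{2m-1}^*(z) + z^t b_{2m}(z) b_{2m}^*(z)\bigr),
\]
which cleanly decouples the $l$-dependence from the $m$-dependence.

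Finally, I sum over $l$ and $m$ separately and apply $p(z)p^*(z) = R_p(z)z^{N-1}$. The GCS hypothesis on $\{\abf_k\}$ collapses the $l$-sum to $W_a\, z^{(s-1)n'}$ with $W_a = \sum_k w(\abf_k)$; in the $m$-factor, the shifts $z^u$ and $z^t$ combine with $z^{t-1}$ and $z^{u-1}$ from $b_{2m-1}b_{2m-1}^*$ and $b_{2m}b_{2m}^*$ to produce a common factor $z^{n'-1}$, so the GCS hypothesis on $\{\bbf_k\}$ collapses the $m$-sum to $W_b\, z^{n'-1}$ with $W_b = \sum_k w(\bbf_k)$. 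Their product is $W_a W_b\, z^{sn'-1}$, matching the right-hand side of \eqref{PCondition} and thus establishing the GCS property of $\{\cbf_{l,m}, \dbf_{l,m}\}$. The only non-trivial obstacle is the careful derivation of $c_{l,m}^*$ and $d_{l,m}^*$ under the asymmetric zero-padding; the subsequent expansion, cancellation, and double summation are mechanical.
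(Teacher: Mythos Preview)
Your approach is essentially the paper's: both pass to the polynomial ring and rely on the factorization
\[
\sum_{l,m}\bigl(c_{l,m}c_{l,m}^{*}+d_{l,m}d_{l,m}^{*}\bigr)=\Bigl(\sum_{k}a_{k}a_{k}^{*}\Bigr)\Bigl(\sum_{k}b'_{k}{b'_{k}}^{*}\Bigr),
\]
after which the two GCS hypotheses collapse each factor to a monomial. The only difference is packaging: the paper invokes this identity as a cited lemma (Lemma~\ref{lem:compromise}, valid over any commutative ring with involution) applied directly to $a_{l}(z^{t+u})$ and the padded $b'_{m}(z)$, whereas you unpack the zero-padding into explicit $z^{t},z^{u}$ shifts and carry out the cross-term cancellation by hand. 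One small omission: you do not check that the entries of $\cbf_{l,m},\dbf_{l,m}$ are actually unimodular (the ``polyphase'' part of the conclusion); the paper notes this follows from the disjoint supports of $\bbf'_{2m-1}$ and $\bbf'_{2m}$.
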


To prove Theorem \ref{thm:compromise} we need to recall the following lemma established in \cite{du2023polyphase}.

\begin{lemma} \cite[Lemma 4]{du2023polyphase} \label{lem:compromise}
  Given a commutative ring ${\cal R}$ with an involution *, and $a_1,\cdots a_{2L}, b_1, \cdots, b_{2M} \in {\cal R}$. For $1\leq l\leq L, 1\leq m\leq M$, suppose
  \bea  \label{eq:composite}
  c_{l, m} &\triangleq a_{2l-1} b_{2m-1} + a_{2l}b_{2m},\\
  d_{l, m} &\triangleq a_{2l}^* b_{2m-1} - a_{2l-1}^*b_{2m},
  \eea
  then
  \ben \label{eq:norm}
  \sum_{l=1}^{L}\sum_{m=1}^{M} c_{l, m}c_{l, m}^{*} + d_{l, m}d_{l, m}^{*} = \sum_{l=1}^{2L} a_l a_l^* \sum_{m=1}^{2M} b_m b_m^*.
  \een
\end{lemma}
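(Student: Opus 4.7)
The plan is a direct algebraic verification in the commutative ring $\mathcal{R}$, using only the ring axioms together with the involution identities $(x+y)^*=x^*+y^*$, $(xy)^*=y^*x^*=x^*y^*$ (the second equality by commutativity), and $(x^*)^*=x$. No deeper structural result is needed; the lemma is essentially a bookkeeping cancellation that is forced by the sign pattern in the definition of $d_{l,m}$.

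First I would expand $c_{l,m}c_{l,m}^{*}$ starting from $c_{l,m}^{*}=a_{2l-1}^{*}b_{2m-1}^{*}+a_{2l}^{*}b_{2m}^{*}$. Multiplying out yields four terms: two ``diagonal'' contributions $a_{2l-1}a_{2l-1}^{*}\,b_{2m-1}b_{2m-1}^{*}+a_{2l}a_{2l}^{*}\,b_{2m}b_{2m}^{*}$, and two ``cross'' contributions $a_{2l-1}a_{2l}^{*}\,b_{2m-1}b_{2m}^{*}+a_{2l}a_{2l-1}^{*}\,b_{2m}b_{2m-1}^{*}$. Next I would do the analogous expansion of $d_{l,m}d_{l,m}^{*}$, noting that $d_{l,m}^{*}=a_{2l}b_{2m-1}^{*}-a_{2l-1}b_{2m}^{*}$; this produces the diagonal contributions $a_{2l}a_{2l}^{*}\,b_{2m-1}b_{2m-1}^{*}+a_{2l-1}a_{2l-1}^{*}\,b_{2m}b_{2m}^{*}$ together with the cross contributions $-a_{2l}^{*}a_{2l-1}\,b_{2m-1}b_{2m}^{*}-a_{2l-1}^{*}a_{2l}\,b_{2m}b_{2m-1}^{*}$.

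Now the key observation: invoking commutativity to rewrite $a_{2l}^{*}a_{2l-1}=a_{2l-1}a_{2l}^{*}$ and $a_{2l-1}^{*}a_{2l}=a_{2l}a_{2l-1}^{*}$, the four cross terms from the two expansions cancel in matched pairs. Only the diagonal terms remain, and they regroup as
\[
c_{l,m}c_{l,m}^{*}+d_{l,m}d_{l,m}^{*}=\bigl(a_{2l-1}a_{2l-1}^{*}+a_{2l}a_{2l}^{*}\bigr)\bigl(b_{2m-1}b_{2m-1}^{*}+b_{2m}b_{2m}^{*}\bigr).
\]
Summing this identity over $1\le l\le L$ and $1\le m\le M$, the double sum on the right factors (commutativity is used again to pull the $b$-dependent factor out of the $l$-sum), and reindexing $\{a_{2l-1},a_{2l}\}_{l=1}^{L}=\{a_{l}\}_{l=1}^{2L}$ and $\{b_{2m-1},b_{2m}\}_{m=1}^{M}=\{b_{m}\}_{m=1}^{2M}$ gives exactly $\sum_{l=1}^{2L}a_{l}a_{l}^{*}\cdot\sum_{m=1}^{2M}b_{m}b_{m}^{*}$, which is \eqref{eq:norm}.

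The only place requiring care is tracking the signs so that the cross terms actually cancel: this hinges jointly on commutativity of $\mathcal{R}$ (to align $a_{2l}^{*}a_{2l-1}$ with $a_{2l-1}a_{2l}^{*}$, and the $b$-pair analogously) and on the opposite sign between the two summands of $d_{l,m}$. Once those two ingredients are in place, the rest is a one-line factorisation, so I do not anticipate any serious obstacle.
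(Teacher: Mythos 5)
Your verification is correct: expanding $c_{l,m}c_{l,m}^{*}$ and $d_{l,m}d_{l,m}^{*}$, the cross terms cancel exactly because of commutativity and the sign flip in $d_{l,m}$, leaving $\bigl(a_{2l-1}a_{2l-1}^{*}+a_{2l}a_{2l}^{*}\bigr)\bigl(b_{2m-1}b_{2m-1}^{*}+b_{2m}b_{2m}^{*}\bigr)$, and summing over $l,m$ gives \eqref{eq:norm}. The paper itself gives no proof of this lemma --- it is quoted verbatim from Lemma 4 of the cited earlier work --- but your direct expansion is the standard (and essentially only) argument for this quaternion-style norm identity, so there is nothing to add.
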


\begin{proof}[Proof of Theorem \ref{thm:compromise}]
    From \eqref{compromise}, we have
\bea \label{eq:compo_poly}
c_{l, m}(z) &= a_{2l-1}(z^{t+u}) b_{2m-1}^{\prime}(z) + a_{2l}(z^{t+u}) b_{2m}^{\prime}(z),\\
d_{l, m}(z) &= a_{2l}^*(z^{t+u}) b_{2m-1}^{\prime}(z) - a_{2l-1}^*(z^{t+u})b_{2m}^{\prime}(z).
\eea
Note that the polynomials of complex sequences constitute a commutative ring, and the map from $a(z)$ to $a^*(z)$ is an involution. Hence \eqref{eq:composite} in Lemma \ref{lem:compromise} is identified with \eqref{eq:compo_poly} with $a_{l} = a_{l}(z^{t+u})$ for $1\leq l \leq 2L$, $b_{m} = b_{m}^{\prime}(z)$ for $1\leq m \leq 2M$ and $c_{l, m}= c_{l, m}(z)$, $d_{l, m}= d_{l, m}(z)$ for $1\leq l \leq L$, $1\leq m \leq 2M$. Then we have the following complementarity of autocorrelations:
\bea
  & \sum_{l=1}^{L}\sum_{m=1}^{M} c_{l, m}(z)c_{l, m}^{*}(z) + d_{l, m}(z)d_{l, m}^{*}(z)\\
  \overset{(1)}{=}&\ \sum_{l=1}^{2L} a_l(z^{t+u}) a_l^*(z^{t+u}) \sum_{m=1}^{2M} b_{m}^{\prime}(z) b_m^{\prime *}(z)\\
  \overset{(2)}{=}&\ \sum_{l=1}^{2L} w(\abf_l) z^{(s-1)(t+u)} \sum_{m=1}^{2M} w(\bbf_m) z^{t+u-1} \\
  =&\ 2LMs(t+u) z^{s(t+u)-1},
\eea
where $\overset{(1)}{=}$ follows from \eqref{eq:norm} and $\overset{(2)}{=}$ holds since $\{\abf_1, \cdots, \abf_{2L}\}$ and $\{\bbf_1, \cdots, \bbf_{2M}\}$ are two GCS sets. Hence it follows from Definition \ref{def:GCS} that $\{\cbf_{l, m}, \dbf_{l, m} \ \vert \ 1\leq l\leq L, 1\leq m\leq M\}$ is a GCS set. It is obvious that the GCS set has length $s(t+u)$ and is of cardinality $2LM$. Besides, owing to the disjoint structure of $\bbf_{2m-1}^{\prime}$ and $\bbf_{2m}^{\prime}$, all the entries of $\cbf_{l,m}, \dbf_{l, m}$ are polyphase. 
\end{proof}

\begin{remark} \label{rem:1}
    Theorem \ref{thm:compromise} still holds even if $\{\bbf_{2m-1} \ \vert \ m=1, 2, \cdots, M\}$ or $\{\bbf_{2m} \ \vert \ m=1, 2, \cdots, M\}$ consists of empty sequences, which means either $t$ or $u$ can be zero.
\end{remark}


\begin{remark} \label{rem:ingredient}
    Two kinds of ingredients can be fed into Theorem \ref{thm:compromise}: the $4$-phase GCS pairs and the CBS, as shown in Section \ref{sec:gcs:pair} and Section \ref{sec:gcs:cbs}, respectively. Note that two GCS pairs can be grouped into a special CBS, although a CBS needs not be a combination of two GCS pairs.
\end{remark}

\subsection{Feasible Lengths of GCS Sets Constructed from GCS pairs} \label{sec:gcs:pair}
\begin{Corollary} \label{coro:gcs_set}
    Define $\mathcal{S}_k \triangleq \mathcal{S}_1 \cdot \left(\mathcal{S}_{k-1}+\mathcal{S}_{k-1}\right)$, $k\geq 2$. Using Theorem \ref{thm:compromise} we can construct $4$-phase GCS sets of cardinality $2^k$ where each sequence has the same length $n \in \mathcal{S}_k$. 
\end{Corollary}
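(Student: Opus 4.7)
I would prove this by induction on $k$, with the recursive decomposition $\mathcal{S}_k = \mathcal{S}_1 \cdot (\mathcal{S}_{k-1}+\mathcal{S}_{k-1})$ designed precisely to feed Theorem \ref{thm:compromise}: the $\mathcal{S}_1$ factor supplies an $\abf$-side length $s$ (covered by a $4$-phase GCS pair via Corollary \ref{Qsize}), and the two summands from $\mathcal{S}_{k-1}$ supply a pair of lengths $t,u$ for the $\bbf$-side, delivered by the inductive hypothesis. This matches the ingredient shape of Theorem \ref{thm:compromise} on the nose.

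The base case $k=1$ is immediate. If $n \in \mathcal{S}_1 \setminus \{0\}$ then $n \in \mathcal{G}_{4p}$ and Corollary \ref{Qsize} yields a $4$-phase GCS pair of length $n$, i.e., a GCS set of cardinality $2^1$; the case $n=0$ is vacuous (take two empty sequences).

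For the inductive step, given $n \in \mathcal{S}_k$ I would write $n = s(t+u)$ with $s \in \mathcal{S}_1$ and $t,u \in \mathcal{S}_{k-1}$. The inductive hypothesis produces two $4$-phase GCS sets $\{\bbf_1^{(t)}, \ldots, \bbf_{2^{k-1}}^{(t)}\}$ and $\{\bbf_1^{(u)}, \ldots, \bbf_{2^{k-1}}^{(u)}\}$ of common lengths $t$ and $u$ respectively. Interleave them by setting $\bbf_{2m-1} \triangleq \bbf_m^{(t)}$ and $\bbf_{2m} \triangleq \bbf_m^{(u)}$ for $m = 1, \ldots, 2^{k-1}$, so the combined cardinality-$2^k$ set has odd-indexed sequences of length $t$ and even-indexed of length $u$, exactly as required by Theorem \ref{thm:compromise}. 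This interleaved set is itself a GCS set because aperiodic autocorrelations are additive: the sum over the $2^k$ sequences splits into the two original GCS-set sums, each of which already vanishes for $\tau \neq 0$. Feeding this into Theorem \ref{thm:compromise} together with a $4$-phase GCS pair $\{\abf_1, \abf_2\}$ of length $s$ (Corollary \ref{Qsize}), with $L=1$ and $M=2^{k-1}$, outputs a $4$-phase GCS set of cardinality $2LM = 2^k$ whose sequences all have length $s(t+u) = n$.

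The degenerate possibilities $s = 0$, $t = 0$, or $u = 0$ are absorbed by Remark \ref{rem:1}, which permits empty-sequence ingredients. I do not expect any genuine obstacle: once the interleaving trick is noticed, the whole argument is a routine induction on the recursive definition of $\mathcal{S}_k$, with all of the algebraic heavy lifting already discharged by Theorem \ref{thm:compromise}.
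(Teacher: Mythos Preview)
Your proposal is correct and matches the paper's proof essentially step for step: both argue by induction, take $L=1$ with a $4$-phase GCS pair on the $\abf$-side, and interleave two inductively-obtained GCS sets of cardinality $2^{k-1}$ as the odd/even halves of the $\bbf$-side (the paper writes this as $\{\bbf_1,\bbf_3,\ldots\}=\mathcal{A}_1$, $\{\bbf_2,\bbf_4,\ldots\}=\mathcal{A}_2$). The only cosmetic difference is that the paper anchors the induction at $k=2$ rather than $k=1$, and it does not explicitly remark that the interleaved family is again a GCS set---a point you spell out.
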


\begin{proof} First, for $k=2$: In Theorem \ref{thm:compromise}, let $L=1$ and $M=2$, and $\{\abf_1, \abf_2\}$, $\{\bbf_1, \bbf_3\}$ and $\{\bbf_2, \bbf_4\}$ are $4$-phase GCS pairs with sequence lengths $s$, $t$ and $u$ respectively. Then $\{\cbf_{1,1}, \dbf_{1,1}, \cbf_{1,2}, \dbf_{1,2}\}$ is a $4$-phase GCS quad, and the length of each sequence is $s(t+u)$ $\in \mathcal{S}_1 \cdot \left(\mathcal{S}_{1}+\mathcal{S}_{1}\right) = \mathcal{S}_2$.

    Suppose we have constructed two $4$-phase GCS sets $\mathcal{A}_1$ and $\mathcal{A}_2$ of cardinality $2^k$ with sequence lengths in $\mathcal{S}_k$. Next apply Theorem \ref{thm:compromise} again: let $L=1$ and $M=2^{k}$, and $\{\bbf_1, \bbf_3, \cdots, \bbf_{2^{k+1}-1}\} = \mathcal{A}_1$, $\{\bbf_2, \bbf_4, \cdots, \bbf_{2^{k+1}}\} = \mathcal{A}_2$, and $\{\abf_1, \abf_2\}$ is a $4$-phase GCS pair. Then $\{\cbf_{1, m}, \dbf_{1, m} \,\vert\,1\leq m\leq 2^k\}$ is a GCS set of cardinality $2^{k+1}$, and each sequence has the same length $n \in \mathcal{S}_1 \cdot \left(\mathcal{S}_{k}+\mathcal{S}_{k}\right) = \mathcal{S}_{k+1}$. By induction, we have completed the proof.
\end{proof}
As an application example of Theorem \ref{thm:compromise}, we can construct a GCS quad with length $87$, since $87 = 3\times(3+26)$. Indeed, from \eqref{gcs:26}\eqref{gcs:3} we can construct ($\otimes$ has higher precedence than $\vert$)
\bea
\cbf_{1, 1} &= \abf_{1}^{(3)}\otimes \abf_{1}^{(3)}\ \big\vert\ \abf_{2}^{(3)}\otimes \abf_{1}^{(26)}\\
\cbf_{1, 2} &= \abf_{1}^{(3)}\otimes \abf_{2}^{(3)}\ \big\vert\ \abf_{2}^{(3)}\otimes \abf_{2}^{(26)}\\
\dbf_{1, 1} &= \left(\abf_{2}^{(3)}\right)^*\otimes \abf_{1}^{(3)} \bigg\vert \left(\underline{\abf_{1}^{(3)}}\right)^*\otimes \abf_{1}^{(26)}\\
\dbf_{1, 2} &= \left(\abf_{2}^{(3)}\right)^*\otimes \abf_{2}^{(3)} \bigg\vert \left(\underline{\abf_{1}^{(3)}}\right)^*\otimes \abf_{2}^{(26)}.
\eea

When $k=2$, Theorem \ref{thm:compromise} degenerates into the construction of GCS quad in \cite[Theorem 5]{du2023polyphase}. Hence the example given above is not new. But for $k=3$, the GCS octets constructed by Theorem \ref{thm:compromise} have more possible lengths, e.g., $127$, $199$, $281$, $283$, $\cdots$, which cannot be covered by the GCS quads. A more comprehensive comparison of their possible lengths are given below.

Let $\rho_k(n) \triangleq \abs{\{s_k \ \vert \ s_k \in \mathcal{S}_k, 1\leq s_k\leq n \}}$, and define the density of the GCS set as $D_k(n) \triangleq \frac{\rho_k(n)}{n}$, which is a measure of the richness of the lengths of the constructed GCS sets. By computational verifications, the densities of GCS pairs, GCS quads and GCS octets obtained from Corollary \ref{coro:gcs_set} for $1\leq n \leq 10^{10}$ are plotted by the solid lines with different markers in Fig. \ref{fig:density}. The densities of GCS sets of larger cardinalities decrease slower. In particular, $D_3(10^{10})$ is very closed to $1$, which means that almost all the lengths no greater than $10^{10}$ can be covered by the GCS octets obtained by Corollary \ref{coro:gcs_set} (the first length that cannot be covered is \numprint{5433479347}).

\begin{figure}[ht!]
\centering
\includegraphics[width=3.3in]{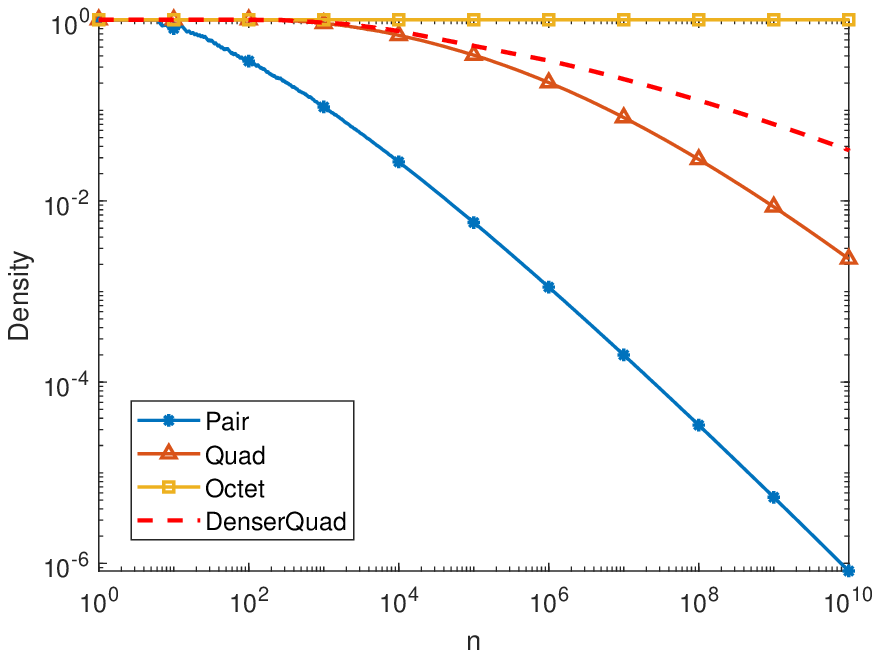}
\caption{Densities of $4$-phase GCS sets.}
\label{fig:density}
\end{figure}

From Fig \ref{fig:density}, it is tempting to conjecture that the GCS octets in Corollary \ref{coro:gcs_set} can cover almost all the positive integers. However, the following property shows that this is impossible.

\begin{Property} \label{property:density}
    For the GCS sets of finite cardinality $2^k$ in Corollary \ref{coro:gcs_set}, $\lim_{n \to 0} D_k(n) = 0$.
\end{Property}
\begin{proof}
    Let $\Tilde{\mathcal{S}}_1 \triangleq \{0\} \cup \{2^a3^b5^c11^d13^e \,\vert\, a, b, c, d, e\geq 0\}$ and define $\Tilde{\mathcal{S}}_k \triangleq \Tilde{\mathcal{S}}_{k-1} + \Tilde{\mathcal{S}}_{k-1}$, $k\geq 2$. Obviously $\mathcal{S}_k \subseteq \Tilde{\mathcal{S}}_k$ for $k\geq 1$. Let $\tilde{\rho}_k(n) \triangleq \abs{\{\Tilde{s}_k \ \vert \ \Tilde{s}_k \in \Tilde{\mathcal{S}}_k, 0\leq \Tilde{s}_k \leq n \}}$. Then we only need to prove that $\lim_{n\to \infty} \frac{\tilde{\rho}_k(n)}{n} = 0$. Note that $\tilde{\rho}_k(n) \leq \tilde{\rho}_{k-1}^2(n)$ and $\tilde{\rho}_1(n) \leq 1+\prod_{i\in\{2, 3, 5, 11, 13\}} \left(\lfloor \log_in\rfloor +1\right)$, hence $\tilde{\rho}_k(n) = \mathcal{O}\left((\log_{2}n)^{5\cdot 2^{k-1}}\right)$. Then we have $\lim_{n\to \infty}\frac{\tilde{\rho}_k(n)}{n} = 0$ for any finite $k$.
\end{proof}

\subsection{Denser Existence from Complex Base Sequences} \label{sec:gcs:cbs}
Note that in the proof of Corollary \ref{coro:gcs_set}, two GCS pairs $\{\bbf_1, \bbf_3\}$ and $\{\bbf_2, \bbf_4\}$ are combined as a CBS to feed into Theorem \ref{thm:compromise} for $k=2$. But as noted in Remark \ref{rem:ingredient}, a CBS needs not be a combination of two GCS pairs. Hence the $CBS(s_2+1, s_2)$ in Section \ref{sec:pre.gcs} with $2s_2+1 \in \mathcal{B}$ [c.f. \eqref{eq:B}], can also be fed into Theorem \ref{thm:compromise} for $k=2$ to obtain new lengths.

Besides, \cite[Theorem 3]{du2023polyphase} constructed $4$-phase GCS quads of equal length (a special case of CBS) by using Yang's multiplicative construction \cite{yang1989composition}, as reformulated in the following proposition with a slight modification that $\{\abf, \bbf, \cbf, \dbf\} \in CBS(s_1, s_2)$ but $\{\abf, \bbf\}$ and $\{\cbf, \dbf\}$ are not necessarily two GCS pairs, so that the construction can still proceed recursively.
\begin{Proposition} \label{prop:pre_quad}
    Given $\{\abf, \bbf, \cbf, \dbf\} \in CBS(s_1, s_2)$, and two GCS pairs $\{\ibf, \jbf\}$ and $\{\kbf, \lbf\}$ with sequence lengths $t_1$ and $t_2$, respectively. Construct a GCS quad $\{\ebf, \fbf, \gbf, \hbf\}$ as follows:
    \begin{enumerate}
    \item \bea \label{eq:concate}
  &\ebf = \abf\otimes \ibf \vert {\bf{0}}^{(t_2s_{2})} \vert {\bf{0}}^{(t_2s_{2})} \vert \bbf\otimes \jbf \\
  &\gbf = \bbf^*\otimes \ibf \vert {\bf{0}}^{(t_2s_{2})} \vert {\bf{0}}^{(t_2s_{2})} \vert {\underline{\abf}^*\otimes \jbf}\\
  &\fbf = {\bf{0}}^{(t_1s_{1})} \vert \cbf\otimes \kbf \vert \dbf\otimes \lbf \vert {\bf{0}}^{(t_1s_{1})} \\
  &\hbf = {\bf{0}}^{(t_1s_{1})} \vert \dbf^*\otimes \kbf \vert {\underline{\cbf}^*\otimes \lbf} \vert {\bf{0}}^{(t_1s_{1})},
  \eea
  where $t_1$ must equal $t_2$ if $\{\abf, \bbf\}$ and $\{\cbf, \dbf\}$ are not two GCS pairs;
        \item if $s_1 = s_2 +1$, we can also have
        \bea \label{eq:inter}
          &\ebf = \abf/{\bf{0}}^{(s_2)}, \qquad \gbf = \bbf/{\bf{0}}^{(s_2)}, \\ &\fbf = {\bf{0}}^{(s_1)}/\cbf, \quad \hbf = {\bf{0}}^{(s_1)}/\dbf,
        \eea
        where $/$ represents interleaving two sequences.
        
    \end{enumerate}
    Suppose $\{\ebf_1, \fbf_1, \gbf_1, \hbf_1\}$ and $\{\ebf_2, \fbf_2, \gbf_2, \hbf_2\}$ are two GCS quads from \eqref{eq:concate} or \eqref{eq:inter}, then the following $\{\pbf, \qbf, \rbf, \sbf\}$ is a $4$-phase GCS quad with equal sequence length:
    \bea 
  \pbf &= \ebf_1 \otimes \fbf_2^* - \gbf_1^* \otimes \ebf_2 + \fbf_1 \otimes \gbf_2 + \hbf_1 \otimes \hbf_2\\
  \qbf &= \ebf_1^* \otimes \ebf_2 + \gbf_1 \otimes \fbf_2^* - \fbf_1 \otimes \hbf_2^* + \hbf_1 \otimes \gbf_2^*\\
  \rbf &= \fbf_1^* \otimes \ebf_2 - \hbf_1 \otimes \fbf_2 + \ebf_1 \otimes \hbf_2^* + \gbf_1 \otimes \gbf_2\\
  \sbf &= -\fbf_1 \otimes \fbf_2 - \hbf_1^* \otimes \ebf_2 + \ebf_1 \otimes \gbf_2^* - \gbf_1 \otimes \hbf_2,
  \eea
\end{Proposition}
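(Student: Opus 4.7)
My plan is to dispatch the proposition in three stages: verify that construction~(1) yields a $4$-phase GCS quad of length $2(s_1 t_1 + s_2 t_2)$, verify that construction~(2) yields a $4$-phase GCS quad of length $2s_2+1$, and then verify that the outer composition formula combines two such GCS quads into a $4$-phase GCS quad of length $N_1 N_2$, where $N_i$ denotes the common length of the $i$-th input quad.

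For construction~(1), I would work in polynomials via \eqref{eq:seq_pol}--\eqref{eq:norm_and_corr}. Set $p = s_1 t_1$, $q = s_2 t_2$, and $N = 2(p+q)$. The zero-padding yields
\[
e(z) = a(z^{t_1})\,i(z) + z^{p+2q} b(z^{t_1})\,j(z), \quad g(z) = b^*(z^{t_1})\,i(z) - z^{p+2q} a^*(z^{t_1})\,j(z),
\]
and analogous expressions for $f(z), h(z)$ supported on the inner blocks. Expanding $ee^* + gg^*$, the $z^0$ and $z^{2p+4q}$ outer cross-terms cancel by the sign choices in $\gbf$, and the surviving middle term collapses to $z^{2p+2q-1}(R_a+R_b)(z^{t_1})(R_i+R_j)(z)$. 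Since $\{\ibf,\jbf\}$ is a GCS pair, $R_i(z)+R_j(z)=2t_1$, and an identical calculation gives $ff^*+hh^* = 2t_2\, z^{N-1}(R_c+R_d)(z^{t_2})$. When $t_1 = t_2$ the two halves combine via the CBS property $R_a + R_b + R_c + R_d = 2(s_1+s_2)$; when instead $\{\abf,\bbf\}$ and $\{\cbf,\dbf\}$ are each GCS pairs, $R_a + R_b$ and $R_c + R_d$ collapse separately and $t_1 \neq t_2$ is permitted. Either way the full sum equals $2N z^{N-1}$, matching the total weight. Construction~(2) is quicker: interleaving with zeros gives $e(z) = a(z^2)$, $f(z) = z\,c(z^2)$, $g(z) = b(z^2)$, $h(z) = z\,d(z^2)$, and $ee^* + ff^* + gg^* + hh^* = (R_a + R_b + R_c + R_d)(z^2)\,z^{2s_2} = 2(s_1+s_2)\,z^{2s_2}$ by the CBS property.

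For the composition, abbreviate $X_1 = x_1(z^{N_2})$ for $x \in \{e,f,g,h\}$ and use $(\xbf \otimes \ybf)^* = \xbf^* \otimes \ybf^*$ to compute the flip-conjugate polynomials. Expanding $pp^* + qq^* + rr^* + ss^*$ yields sixty-four summands. Each of the sixteen ``diagonal'' summands takes the form $X_1 X_1^* \cdot y_2 y_2^* = R_{x_1}(z^{N_2}) R_{y_2}(z)\,z^{N_1 N_2 - 1}$; a direct tally (inspecting each of $pp^*, qq^*, rr^*, ss^*$ in turn) shows that every product $R_{x_1} R_{y_2}$ with $x, y \in \{e,f,g,h\}$ appears exactly once. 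Therefore the diagonal sum factors as
\[
\Bigl[\textstyle\sum_x R_{x_1}(z^{N_2})\Bigr] \cdot \Bigl[\textstyle\sum_y R_{y_2}(z)\Bigr] \cdot z^{N_1 N_2 - 1} \;=\; N_1 N_2\, z^{N_1 N_2 - 1},
\]
by the GCS property of each input quad, already matching the required total weight.

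The principal obstacle is to verify that the remaining forty-eight off-diagonal summands cancel in pairs. The signs and $(\cdot)^*$ marks in the definitions of $\pbf, \qbf, \rbf, \sbf$ are engineered (in the spirit of Yang's composition \cite{yang1989composition}) precisely for this cancellation. My bookkeeping strategy is to group off-diagonal summands by their pair of ``outer'' letters from $\{e_1,f_1,g_1,h_1\}$ and, for each monomial arising in one of $pp^*, qq^*, rr^*, ss^*$, exhibit a sign-opposite partner in another of the four products; commutativity of polynomial multiplication then closes each pairing. Finally, the $4$-phase alphabet property of $\pbf, \qbf, \rbf, \sbf$ follows from a support-disjointness argument: each input quad has a block decomposition in which $\{\ebf, \gbf\}$ and $\{\fbf, \hbf\}$ have disjoint supports (outer versus inner blocks for construction~(1), even versus odd indices for construction~(2)), so the four Kronecker-product summands in each of $\pbf, \qbf, \rbf, \sbf$ live in pairwise-disjoint Cartesian regions and their values do not accumulate. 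Equal length $N_1 N_2$ is immediate from the Kronecker-product dimensions.
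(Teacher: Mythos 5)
Your proposal is correct and follows essentially the same route as the paper: the paper defers constructions \eqref{eq:concate}, \eqref{eq:inter} and the Yang-type composition to \cite{du2023polyphase}, writing out only the factorizations $R_e(z)+R_g(z)=2t_1\left(R_a(z^{t_1})+R_b(z^{t_1})\right)$ and $R_f(z)+R_h(z)=2t_2\left(R_c(z^{t_2})+R_d(z^{t_2})\right)$ to justify the $t_1=t_2$ condition, and you reproduce exactly this argument while also reconstructing the deferred parts. One cosmetic slip: each index of $\Znum_{N_i}$ lies in the support of exactly two of the four sequences of an input quad, so $\sum_{x}R_{x_i}(z)=2N_i$ rather than $N_i$, and the diagonal sum equals the total weight $4N_1N_2\,z^{N_1N_2-1}$ --- the factor of four is consistent on both sides, so the conclusion stands. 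The $48$-term off-diagonal cancellation that you describe as a strategy rather than execute does go through under your pairing (every monomial in one of $pp^*,qq^*,rr^*,ss^*$ finds a sign-opposite commuted partner in another), and your support-disjointness argument for the $4$-phase alphabet is valid because the supports of $\ebf,\gbf$ and of $\fbf,\hbf$ are symmetric, complementary pairs in both constructions.
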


\begin{proof}
    The proof of Proposition \ref{prop:pre_quad} is the same as that in \cite{du2023polyphase}, except for the following reason why $t_1$ must equal $t_2$ if $\{\abf, \bbf\}$ and $\{\cbf, \dbf\}$ in \eqref{eq:concate} are not two GCS pairs. Similar to \cite[Proposition IV.2]{du2023polyphase}, one may prove that 
\bea
R_e(z) + R_g(z) &= \left(R_a(z^{t_1}) + R_b(z^{t_1})\right) \left(R_i(z) + R_j(z)\right) \\
&= 2t_1 \left(R_a(z^{t_1}) + R_b(z^{t_1})\right)
\eea
and 
\bea
R_f(z) + R_h(z) &= \left(R_c(z^{t_2}) + R_d(z^{t_2})\right) \left(R_k(z) + R_l(z)\right)\\
&= 2t_2 \left(R_c(z^{t_2}) + R_d(z^{t_2})\right).
\eea
If $\{\abf, \bbf\}$ and $\{\cbf, \dbf\}$ are two GCS pairs, then we have 
\ben
R_e(z) + R_g(z) = 4s_1t_1, \ R_f(z) + R_h(z) = 4s_2t_2,
\een
which is sufficient for the autocorrelations of $\{\ebf, \fbf, \gbf, \hbf\}$ being complementary; otherwise, the condition $t_1 = t_2$ is necessary so that 
\bea
&R_e(z) + R_g(z) + R_f(z) + R_h(z)\\
=& 2t_1 \left(R_a(z^{t_1}) + R_b(z^{t_1}) + R_c(z^{t_2}) + R_d(z^{t_2})\right)\\
=& 4t_1(s_1 + s_2).
\eea
\end{proof}

Feeding into Proposition \ref{prop:pre_quad} the $CBS(s_1,s_2)$ with $s_1 = s_2 +1$ and the $4$-phase GCS pairs in Section \ref{sec:pre.gcs}, we can construct $CBS(s, s)$ with $s\in \mathcal{F} \triangleq \mathcal{E}^2$, where $\mathcal{E} \triangleq \mathcal{B} \cup \left(\{2\} \cdot \mathcal{S}_1 \cdot \mathcal{B}\right) \cup \left(\{2\} \cdot (\mathcal{S}_1^2 + \mathcal{S}_1^2)\right)$ is the set of feasible lengths of $\{\ebf, \fbf, \gbf, \hbf\}$ in Proposition \ref{prop:pre_quad}: 
\begin{itemize}
    \item Because $\{\abf, \bbf, \cbf, \dbf\}$ in \eqref{eq:inter} $\in CBS(s_2+1, s_2)$, $\mathcal{B} \subset \mathcal{E}$.
    \item If $\{\abf, \bbf, \cbf, \dbf\}$ in \eqref{eq:concate} $\in CBS(s_2+1, s_2)$ and $\{\ibf, \jbf\}$ and $\{\kbf, \lbf\}$ are two GCS pairs with sequence lengths $t_1 = t_2 \in \mathcal{S}_1$ then $\{2\} \cdot \mathcal{S}_1 \cdot \mathcal{B} \subset \mathcal{E}$.
    \item If $\{\abf, \bbf\}$, $\{\cbf, \dbf\}$, $\{\ibf, \jbf\}$ and $\{\kbf, \lbf\}$ in \eqref{eq:concate} are four GCS pairs, then $\{2\} \cdot (\mathcal{S}_1^2 + \mathcal{S}_1^2) \subset \mathcal{E}$.
\end{itemize} 

The constructed $CBS(s, s)$ can be fed recursively into Proposition \ref{prop:pre_quad} to update $\mathcal{F}_{new} \triangleq \mathcal{E}_{new}^2$ with $\mathcal{E}_{new} \triangleq \mathcal{E} \cup \left(\{4\}\cdot \mathcal{S}_1 \cdot \mathcal{F}\right)$: If $\{\abf, \bbf, \cbf, \dbf\}$ in \eqref{eq:concate} $\in CBS(s, s)$ with $s\in \mathcal{F}$ and $\{\ibf, \jbf\}$ and $\{\kbf, \lbf\}$ are two GCS pairs with sequence lengths $t_1 = t_2 \in \mathcal{S}_1$, then $\{4\}\cdot \mathcal{S}_1 \cdot \mathcal{F} \subset \mathcal{E}_{new}$. Note that $\pbf, \qbf, \rbf$ and $\sbf$ have equal length, hence $\mathcal{S}_2$ is directly enlarged  without resorting to Theorem \ref{thm:compromise}. 

After enlarging $\mathcal{S}_2$, we proceed to the steps $k=3, 4,\cdots$ in the proof of Corollary \ref{coro:gcs_set}. Denote by $\mathcal{S}_k^D$ the enlarged set for $k\geq 2$ (the superscript $D$ stands for "denser"). By computational verifications, the density of $\mathcal{S}_2^D$ is plotted as the dashed line in Fig. \ref{fig:density}. Computer searches also verify that $\mathcal{S}_3^D$ covers all the lengths no greater than $10^{13}$, which is a great improvement over $\mathcal{S}_3$ covering all the lengths less than \numprint{5433479347}. The C\texttt{++} codes for verifying the densities of 
$\mathcal{S}_2^D$ and $\mathcal{S}_3^D$ are available online: \url{https://github.com/csrlab-fudan/gcs_hadamard}.

\subsection{Construction of 4-phase GCS of Arbitrary Length}
Next, we show how to construct $4$-phase GCS of arbitrary equal length, with much smaller cardinality than the known results in the literatures.


\begin{theorem} \label{thm:gcs_log}
    Suppose $\mathcal{S}_k^D$ covers all the lengths no greater than $N$ for some $k$. Define a set of integers 
    \ben \label{eq:s_below}
    \undertilde{\mathcal{S}} \triangleq \{2, 10, 26\} \cdot \mathcal{S}_1.
    \een
    Let $P \triangleq \max\{p\,\vert\,p\in \undertilde{\mathcal{S}}, p\leq N+1\}$. The $P$-base expansion of any integer $n$ is 
    \ben
    n = n_0+n_1P+\cdots+n_qP^q,
    \een
    where $q = \lfloor \log_{P}n\rfloor$ and $0\leq n_0, n_1, \cdots, n_q < P$. Define $\mathcal{I} \triangleq \{i \,\vert\,n_i\neq 0, i=0, 1, \cdots, q\}$ and $r \triangleq \abs{\mathcal{I}}$. Then there exists a GCS set of cardinality $2^{k+\lceil \log_2 r \rceil}$, where all the sequences are of length $n$.
\end{theorem}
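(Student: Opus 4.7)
The plan is to prove the theorem by induction on $r$, constructing the GCS set via a balanced binary tree of applications of Theorem~\ref{thm:compromise}. The $P$-base expansion decomposes $n$ as a sum of $r$ terms $n_{i_j}P^{i_j}$, and each leaf of the tree corresponds to one such term. At each internal node, I combine the two child GCS sets by interleaving their sequences into a mixed GCS set with two alternating lengths (padding the smaller-cardinality child with zero sequences when needed), then apply Theorem~\ref{thm:compromise} with the trivial GCS pair of length $s=1$. Each such combination step doubles the cardinality and additively merges the two child lengths; after $\lceil\log_2 r\rceil$ steps the cardinality is $2^{k+\lceil\log_2 r\rceil}$ and the length is the full sum $n$.

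The substance lies in the base case $r=1$, where $n=n_{i_1}P^{i_1}$. Since $n_{i_1}\le P-1\le N$, by hypothesis we have a GCS set of length $n_{i_1}$ and cardinality $2^k$ in $\mathcal{S}_k^D$. From the recursive definition $\mathcal{S}_k^D=\mathcal{S}_1\cdot(\mathcal{S}_{k-1}^D+\mathcal{S}_{k-1}^D)$, this GCS set comes from an application of Theorem~\ref{thm:compromise} with a top-level $4$-phase GCS pair of some length $s\in\mathcal{S}_1$. I plan to replace this top-level GCS pair by one of length $sP^{i_1}$, which immediately yields a GCS set of length $sP^{i_1}(t+u)=n_{i_1}P^{i_1}=n$ of the same cardinality $2^k=2^{k+\lceil\log_2 1\rceil}$.

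For the inductive step $r\ge 2$, let $m=\lceil r/2\rceil$, partition the sorted nonzero indices into a left half of size $m$ and a right half of size $r-m$, and write $n=n_L+n_R$ accordingly. The induction hypothesis supplies GCS sets for $n_L$ and $n_R$ of cardinalities $2^{k+\lceil\log_2 m\rceil}$ and $2^{k+\lceil\log_2(r-m)\rceil}$. After padding the smaller with zero sequences so both reach cardinality $2^{k+\lceil\log_2 r\rceil-1}$, I interleave into a mixed GCS set of cardinality $2^{k+\lceil\log_2 r\rceil}$ with alternating lengths $n_L,n_R$, and apply Theorem~\ref{thm:compromise} with the trivial GCS pair to obtain a GCS set of length $n_L+n_R=n$ and cardinality $2^{k+\lceil\log_2 r\rceil}$.

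The main obstacle is the base-case claim $sP^{i_1}\in\mathcal{S}_1$, i.e., that a $4$-phase GCS pair of length $sP^{i_1}$ exists. Writing $P=p\cdot s'$ with $p\in\{2,10,26\}$ a $2$-phase Golay number and $s'\in\mathcal{S}_1$, the exponent tuple of $sP^{i_1}$ in the basis $\{2,3,5,11,13\}$ is an affine function of those of $s$, $s'$ and $p$. I plan to verify the two $4$-phase Golay constraints $b+c+d+e\le a+2u+1$ and $u\le c+e$ via the explicit choice $u_{\mathrm{new}}=u_s+i_1(c_{s'}+e_{s'}+\varepsilon_{p,5}+\varepsilon_{p,13})$, where $\varepsilon_{p,5}$ and $\varepsilon_{p,13}$ indicate whether $p$ contributes an extra factor of $5$ or $13$; each constraint then reduces to a direct consequence of the inequalities inherited from $s,s'\in\mathcal{S}_1$, after which the induction runs mechanically.
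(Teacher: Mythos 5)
Your overall architecture coincides with the paper's: cover each digit $n_i\le N$ by $\mathcal{S}_k^D$, attach the factor $P^i$ multiplicatively without changing the cardinality, and then merge the $r$ resulting sets pairwise through $\lceil\log_2 r\rceil$ levels of Theorem \ref{thm:compromise} with the trivial pair $\abf_1=\abf_2=1$, padding with empty sequences. That additive half of your argument is exactly the paper's hierarchical combination (Fig.\ \ref{fig:hier}) and is fine. The divergence, and the problems, lie in how you multiply by $P^{i_1}$.

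First, your base case rests on the structural claim that every GCS set with length in $\mathcal{S}_k^D$ arises from a top-level application of Theorem \ref{thm:compromise} with a $4$-phase GCS pair of length $s\in\mathcal{S}_1$. That is true for $k\ge 3$ (where $\mathcal{S}_k^D=\mathcal{S}_1\cdot(\mathcal{S}_{k-1}^D+\mathcal{S}_{k-1}^D)$), but false for $k=2$: the lengths in $\mathcal{F}=\mathcal{E}^2$ contributed by Proposition \ref{prop:pre_quad} enter $\mathcal{S}_2^D$ directly, without resorting to Theorem \ref{thm:compromise}, and need not expose a factor $s\in\mathcal{S}_1$ that you could swap out. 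Second, your explicit verification of $sP^{i_1}\in\mathcal{S}_1$ is broken: writing $P=ps'$ with $s'=2^{a'+u'}3^{b'}5^{c'}11^{d'}13^{e'}$, the correct witness is $u_{\mathrm{new}}=u_s+i_1\left(u'+\varepsilon_{p,5}+\varepsilon_{p,13}\right)$, not $u_s+i_1\left(c'+e'+\varepsilon_{p,5}+\varepsilon_{p,13}\right)$. With your choice the residual exponent $a_{\mathrm{new}}$ of $2$ can go negative: for $P=50=10\cdot 5$ one is forced to take $a'=u'=0$, $c'=1$, so your $u_{\mathrm{new}}$ grows by $2i_1$ while the total exponent of $2$ grows only by $i_1$. (The membership $sP^{i}\in\mathcal{S}_1$ is nonetheless true --- iterate Proposition \ref{prop:quater_seq} with the nontrivial $2$-phase pair of length $p$ and the $4$-phase pairs of lengths $sP^{i-1}$ and $s'$ --- so there is no need to redo the exponent bookkeeping at all.) The paper sidesteps both issues with Proposition \ref{prop:multiplicative} and Corollary \ref{coro:multiply}: a multiplicative construction that scales the common length of an \emph{arbitrary} equal-length GCS set by any $\undertilde{s}\in\undertilde{\mathcal{S}}$ while preserving the cardinality, applied $i$ times to the digit set of length $n_i$ with no assumption on that set's provenance. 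You should replace your ``swap the top-level pair'' step by this mechanism, or else restrict your argument to $k\ge 3$ and invoke Proposition \ref{prop:quater_seq} for the membership claim.
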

 
To prove Theorem \ref{thm:gcs_log}, we need to first establish the following proposition, which is a  natural extension of Proposition \ref{prop:quater_seq}.

\begin{Proposition} \label{prop:multiplicative}
    Given two polyphase GCS sets $\{\abf_1, \cdots, \abf_{2L}\}$ and $\{\bbf_1, \cdots, \bbf_{2M}\}$ with sequence lengths $s$ and $t$ respectively, and a nontrivial 2-phase GCS pair $\{\ebf, \fbf\}$ with sequence length $u$. Suppose
    \bea
    \pbf &= \frac{1}{4} {\left[\ebf+\fbf+\left(\fbf^*-\ebf^*\right)\right]},\\
    \qbf &= \frac{1}{4} {\left[\ebf+\fbf-\left(\fbf^*-\ebf^*\right)\right]},\\
    \bbf_{2m-1}^{\prime} &= \pbf \otimes \bbf_{2m-1} + \qbf \otimes \bbf_{2m},\\
    \bbf_{2m}^{\prime} &= \qbf^* \otimes \bbf_{2m-1} - \pbf^* \otimes \bbf_{2m},\\
    \cbf_{l, m} &= \abf_{2l-1} \otimes \bbf_{2m-1}^{\prime} + \abf_{2l} \otimes \bbf_{2m}^{\prime},\\
    \dbf_{l, m} &= \abf_{2l}^* \otimes \bbf_{2m-1}^{\prime} - \abf_{2l-1}^* \otimes \bbf_{2m}^{\prime}.
    \eea
    Then $\{\cbf_{l, m}, \dbf_{l, m} \, \vert \, 1\leq l\leq L, 1\leq m\leq M\}$ is a polyphase GCS set of cardinality $2LM$, and the length of each sequence is $stu$.
\end{Proposition}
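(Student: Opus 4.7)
The plan is to establish Proposition~\ref{prop:multiplicative} via two nested applications of Lemma~\ref{lem:compromise}, following the template of Theorem~\ref{thm:compromise}. First, I would show that the auxiliary set $\{\bbf_{2m-1}^\prime, \bbf_{2m}^\prime \,\vert\, 1\leq m\leq M\}$ is itself a polyphase GCS set of cardinality $2M$ with each sequence of length $ut$. Then I would combine it with $\{\abf_l\}$ through a second application of Lemma~\ref{lem:compromise} to produce the desired $\{\cbf_{l,m}, \dbf_{l,m}\}$ of cardinality $2LM$ and length $sut$.

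For the first stage, passing to the polynomial representation gives
\bea
b_{2m-1}^{\prime}(z) &= p(z^t)\,b_{2m-1}(z) + q(z^t)\,b_{2m}(z),\\
b_{2m}^{\prime}(z) &= q^{*}(z^t)\,b_{2m-1}(z) - p^{*}(z^t)\,b_{2m}(z),
\eea
which matches the form of $c_{l,m}, d_{l,m}$ in \eqref{eq:composite} upon setting $L=1$, $a_1=p(z^t)$, $a_2=q(z^t)$, and $b_m=b_m(z)$. A direct polynomial expansion, using $(e^*)^{*}=e$, $(f^*)^{*}=f$, and the commutativity of polynomial multiplication, reduces $p(z)p^{*}(z)+q(z)q^{*}(z)$ to $\frac{1}{4}[e(z)e^{*}(z)+f(z)f^{*}(z)] = \frac{u}{2}\,z^{u-1}$, where the last equality invokes the 2-phase GCS-pair property of $\{\ebf,\fbf\}$. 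Combined with $\sum_{m=1}^{2M} b_m(z)b_m^{*}(z) = 2Mt\,z^{t-1}$, Lemma~\ref{lem:compromise} then delivers $\sum_{m=1}^{M}[b_{2m-1}^{\prime}b_{2m-1}^{\prime *}+b_{2m}^{\prime}b_{2m}^{\prime *}] = Mut\,z^{ut-1}$, the required complementarity for $\{\bbf_m^\prime\}$.

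For the second stage, I would apply Lemma~\ref{lem:compromise} again, this time with ring elements $a_l(z^{ut})$ for $l = 1,\ldots,2L$ and $b_m^{\prime}(z)$ for $m = 1,\ldots,2M$; the corresponding $c_{l,m}, d_{l,m}$ in the lemma coincide with the polynomials of $\cbf_{l,m}, \dbf_{l,m}$. Using $\sum_{l=1}^{2L} a_l(z^{ut})a_l^{*}(z^{ut}) = 2Ls\,z^{ut(s-1)}$ together with the first-stage identity, Lemma~\ref{lem:compromise} yields $\sum_{l,m}[c_{l,m}c_{l,m}^{*}+d_{l,m}d_{l,m}^{*}] = 2LMsut\,z^{sut-1}$, which is exactly the complementarity required of a GCS set of cardinality $2LM$ with sequence length $sut$.

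The main obstacle is the polyphase verification, since Lemma~\ref{lem:compromise} is purely algebraic and does not by itself guarantee that the output entries lie in $\{\pm 1, \pm i\}$ or zero. The crux is the structural fact that $\pbf, \qbf$ have entries in $\{-1,0,1\}$ with disjoint supports, and that $\mathrm{supp}(\pbf)\cup\mathrm{supp}(\qbf)$ is disjoint from its reverse $\mathrm{supp}(\pbf^{*})\cup\mathrm{supp}(\qbf^{*})$ within $\Znum_u$; this is the same property of 2-phase GCS pairs underlying Craigen's Proposition~\ref{prop:quater_seq}. It implies that $\bbf_{2m-1}^{\prime}$ and $\bbf_{2m}^{\prime}$ occupy disjoint positions in $\Znum_{ut}$, so the Kronecker products $\abf_{2l-1}\otimes\bbf_{2m-1}^{\prime}$ and $\abf_{2l}\otimes\bbf_{2m}^{\prime}$ (with the $\abf$'s polyphase and nowhere zero) also have disjoint supports, whence $\cbf_{l,m}$ and analogously $\dbf_{l,m}$ are polyphase. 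A total-weight count finally matches the constants $Mut$ and $2LMsut$ in the complementary sums with the total numbers of nonzero entries, completing the argument.
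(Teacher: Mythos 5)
Your proposal is correct, and it supplies essentially the argument the paper intends: the paper states Proposition~\ref{prop:multiplicative} without proof, calling it a ``natural extension'' of Proposition~\ref{prop:quater_seq}, and your two-stage route --- first verifying $p(z)p^*(z)+q(z)q^*(z)=\frac{1}{4}\left[e(z)e^*(z)+f(z)f^*(z)\right]=\frac{u}{2}z^{u-1}$ and feeding $p(z^t), q(z^t)$ into Lemma~\ref{lem:compromise}, then repeating the polynomial argument of Theorem~\ref{thm:compromise} with the resulting set $\{\bbf_m^{\prime}\}$ --- is exactly that extension. Your handling of the polyphase issue is also sound: disjointness of the supports of $\pbf$, $\qbf$ and of $\pbf^*$, $\qbf^*$ (Craigen's structural fact for nontrivial $2$-phase pairs) gives unimodular-or-zero entries, and the weight count $2LMstu$ against $2LM$ sequences of length $stu$ forces every entry to be nonzero.
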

In Proposition \ref{prop:multiplicative}, if let $L=1$, $\{\abf_1, \abf_2\}$ be a $4$-phase GCS pair with sequence length $s \in \mathcal{S}_1$ and $u\in \{2, 10, 26\}$, then with the same cardinality $2M$, the sequence length of $\{\cbf_{1, m}, \dbf_{1, m} \, \vert \, 1\leq m\leq M\}$ is enlarged by a factor of $su \in \undertilde{\mathcal{S}}$  [c.f. \eqref{eq:s_below}] compared with that of $\{\bbf_1, \cdots, \bbf_{2M}\}$. Hence the corollary follows in the below. 
\begin{Corollary} \label{coro:multiply}
     For a GCS set with sequences of equal length, the length can be enlarged by a factor of $\undertilde{s} \in \undertilde{\mathcal{S}}$ without enlarging the cardinality.
\end{Corollary}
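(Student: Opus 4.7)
The plan is to specialize Proposition \ref{prop:multiplicative} by choosing appropriate ingredients in the roles of $\{\abf_1,\abf_2\}$ and $\{\ebf,\fbf\}$ so that the cardinality is preserved while the length is scaled by a prescribed factor $\undertilde{s}\in\undertilde{\mathcal{S}}$. Given any GCS set of cardinality $2M$ with equal sequence length $t$, I place it in the role of $\{\bbf_1,\ldots,\bbf_{2M}\}$. The goal is then to produce an output of length $\undertilde{s}\,t$ whose cardinality is still $2M$.

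Since $\undertilde{\mathcal{S}}=\{2,10,26\}\cdot\mathcal{S}_1$, any $\undertilde{s}\in\undertilde{\mathcal{S}}$ admits a factorization $\undertilde{s}=su$ with $s\in\mathcal{S}_1$ and $u\in\{2,10,26\}$. By the definition $\mathcal{S}_1=\{0\}\cup\mathcal{G}_{4p}$ (the case $s=0$ being trivial and excluded since $\undertilde{s}\neq 0$), there exists a 4-phase GCS pair $\{\abf_1,\abf_2\}$ of length $s$, which I take as the first input. For the nontrivial 2-phase GCS pair $\{\ebf,\fbf\}$ of length $u$, I simply pick one of the classical small examples displayed in \eqref{gcs:2}, \eqref{gcs:10}, or \eqref{gcs:26}, corresponding to $u=2,10,26$ respectively.

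Setting $L=1$ in Proposition \ref{prop:multiplicative} and plugging in these ingredients, the conclusion yields a polyphase GCS set $\{\cbf_{1,m},\dbf_{1,m}\,\vert\,1\leq m\leq M\}$ of cardinality $2\cdot 1\cdot M=2M$ and of common sequence length $s\,t\,u=\undertilde{s}\,t$. That is exactly the claim: the length is multiplied by $\undertilde{s}$ while the cardinality remains $2M$. Moreover, because $\{\abf_1,\abf_2\}$ is 4-phase and $\{\ebf,\fbf\}$ is 2-phase, if the input is a 4-phase GCS set then the output is also 4-phase, so the phase quality of the input is preserved.

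There is no substantive obstacle in this argument, since the heavy lifting has already been absorbed into Proposition \ref{prop:multiplicative}. The only points to double-check are that (i) the factorization $\undertilde{s}=su$ with $s\in\mathcal{S}_1$ and $u\in\{2,10,26\}$ is always available from the very definition of $\undertilde{\mathcal{S}}$, and (ii) the required ingredient pairs of lengths $s$ and $u$ indeed exist, which follows from $s\in\mathcal{G}_{4p}$ together with the explicit pairs \eqref{gcs:2}--\eqref{gcs:26}. Consequently, the corollary will follow as a direct specialization of Proposition \ref{prop:multiplicative}.
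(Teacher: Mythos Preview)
Your proposal is correct and follows essentially the same route as the paper: the paper likewise specializes Proposition~\ref{prop:multiplicative} with $L=1$, taking $\{\abf_1,\abf_2\}$ a $4$-phase GCS pair of length $s\in\mathcal{S}_1$ and $\{\ebf,\fbf\}$ a nontrivial $2$-phase GCS pair of length $u\in\{2,10,26\}$, so that the output has cardinality $2M$ and length $sut=\undertilde{s}\,t$.
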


Now we are ready to prove Theorem \ref{thm:gcs_log}. 
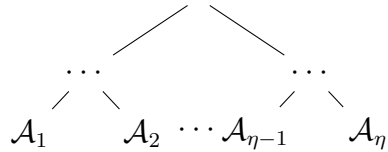
\begin{figure}
    \centering
    \begin{tikzpicture}[level distance=1.3cm,
       level 1/.style={sibling distance=3cm, level distance=1cm},
       level 2/.style={sibling distance=1.5cm, level distance=0.8cm}] 
    \node {}
       child {node {$\cdots$}
       child {node {$\mathcal{A}_1$}}
       child {node {$\mathcal{A}_2$}}
    }
    child {node {$\cdots$}
       child {node {$\mathcal{A}_{\eta-1}$}}
       child {node {$\mathcal{A}_{\eta}$}}
    };
    \node at (0,-1.75) {$\cdots$};
    \end{tikzpicture}
    \caption{Hierarchical applications of Theorem \ref{thm:compromise} in $\lceil \log_2 r \rceil$ levels.}
    \label{fig:hier}
\end{figure}
\begin{proof}[Proof of Theorem \ref{thm:gcs_log}]
    Because $0<n_i<P$ for $\forall\, i \in \mathcal{I}$ and $P\leq N+1$, we have $0<n_i\leq N$ for $\forall\, i \in \mathcal{I}$, which can be covered by $\mathcal{S}_k^D$ by assumption. By recursively applying Corollary \ref{coro:multiply}, the sequence length $n_i$ can be multiplied with $P^i$ without enlarging the cardinality.
 
    Now we have obtained $r$ GCS sets of cardinality $2^k$ denoted by $\mathcal{A}_1, \cdots, \mathcal{A}_r$, where the sequence length in each set is $n_iP^{i}$, $i \in \mathcal{I}$. If $r$ is not a power of $2$, then append $\eta - r$ sets of $2^k$ empty sequences denoted by $\mathcal{A}_{r+1}, \cdots, \mathcal{A}_{\eta}$ with $\eta \triangleq 2^{\lceil \log_2 r \rceil}$. Next we add up the lengths $n_iP^i$ by using Theorem \ref{thm:compromise} hierarchically in $\lceil \log_2 r \rceil$ levels, as illustrated by Fig. \ref{fig:hier}, where a parent node is the result of applying Theorem \ref{thm:compromise} to two child nodes to add up their lengths (let $L=1$ and $\abf_1 = \abf_2 = 1$). Then we obtain a GCS set of cardinality $2^{k+\lceil \log_2 r \rceil}$, where the length of each sequence is $n = n_0+n_1P+\cdots+n_qP^q$.
\end{proof}


 
As we have computationally verified that all the positive integers no greater than $N = 10^{13}$ can be covered by $\mathcal{S}_3^D$, we set $P \triangleq \max\{p\,\vert\,p\in \undertilde{\mathcal{S}}, p\leq N+1\} = 10^{13}$, and we can construct a GCS set of cardinality $2^{3+\lceil \log_2 r \rceil}$ with arbitrary sequence length $n$, where $r$ is the number of non-zero digits of the $10^{13}$-base expansion of $n$.

Some direct constructions of GCS sets can also cover arbitrary length \cite{chen2017novel, ghosh2022direct} \footnote{\cite{ghosh2022direct} constructed 2D Golay complementary array (GCA), a multi-dimensional generalization of GCS. By applying \cite[Theorem 2]{du2023polyphase} to the sequences we construct here, high-dimensional GCAs of arbitrary array size can also be constructed at the cost of enlarging the cardinality.}. As shown in Table \ref{tab:gcs_length}, the cardinality of the GCS set constructed in \cite{ghosh2022direct} may increase linearly with respective to $n$, e.g., if $n$ is a prime number. The cardinality in \cite{chen2017novel} increases logarithmically to the base $2$, while the cardinality in this work increases logarithmically to the base $10^{13}$.
\begin{table*}[t]
\caption{Parameters of GCS set of arbitrary length}  \label{tab:gcs_length}
    \centering
    \begin{threeparttable}
    \begin{tabular}{|c|c|c|c|}
        \hline
            Construction & Sequence length & Cardinality & Number of phases\\
        \hline
        \cite{ghosh2022direct} & $n = \prod_{i=1}^\lambda n_i^{m_i}$, $n_i$ is prime, $m_i\geq 0$ &$\prod_{i=1}^\lambda n_i$ & $\text{lcm}(n_1, n_2, \cdots, n_\lambda)$\\ 
        \hline
        \cite{chen2017novel} & \makecell{$n = 2^{m-1} + \sum_{\alpha = 1}^{k-1} d_\alpha 2^{\pi \left(m-k+\alpha\right) -1} + d_0 2^v,$ \tnote{1} \\ $k<m,\ 0\leq v \leq m-k,\ d_a \in \{0, 1\}$} & $2^{k+1}$ & any even number\\
        \hline
        Theorem \ref{thm:gcs_log} & $n = \sum_{i=0}^{q} n_i P^{i}$, $P=10^{13}$, $n_i \in \Znum_P$ &$2^{3+\lceil \log_2 r \rceil}, r = \abs{\{i \,\vert\,n_i>0\}}$ & $4$\\
        \hline
    \end{tabular}
    \begin{tablenotes}
        \item[1] $\pi$ is a permutation of $\{1, 2, \cdots, M\}$ satisfying some constraints \cite{chen2017novel}.
    \end{tablenotes}
\end{threeparttable}
\end{table*}

\section{Construction of Hadamard Matrices from GCS} \label{sec:hadamard}
An $n\times n$ matrix $\Hbf$ with entries in $\{1, -1\}$ is a Hadamard matrix of order $n$ if $\Hbf\Hbf^T = n\Ibf$. The celebrated Hadamard conjecture states that there may exist Hadamard matrices of order $4n$ for any positive integer $n$. Many efforts have been devoted to the Hadamard conjecture, see \cite{seberry2020hadamard} and the references therein for this topic. Here we are only interested in the constructions of Hadamard matrices from some GCS sets.

In \cite{craigen2002complex}, a Hadamard matrix of order $8n$ was constructed from a $4$-phase GCS quad with sequence length $n$, using a modification of the Goethal-Seidel array \cite{goethals1970skew}. The construction is reproduced in the below  for ease of reference.
\begin{Proposition}[\cite{craigen2002complex}] \label{prop:gs}
    First construct four circulant matrices $\Abf$, $\Bbf$, $\Cbf$ and $\Dbf$ from a $4$-phase GCS quad $\{\abf, \bbf, \cbf, \dbf\}$ with sequence length $n$. Then construct a matrix $\Tilde{\Hbf}$ over $SP_2$ of order $4n$ as follows:
    \ben
    \Tilde{\Hbf} = \begin{bmatrix}
        \Abf & -\Bbf\Rbf j & -\Cbf\Rbf j & -\Dbf\Rbf j & \\
        \Bbf\Rbf j & \Abf &-\Dbf^*\Rbf j & \Cbf^*\Rbf j \\
        \Cbf\Rbf j & \Dbf^* \Rbf j & \Abf & -\Bbf^*\Rbf j \\
        \Dbf\Rbf j & -\Cbf^*\Rbf j & \Bbf^*\Rbf j & \Abf
    \end{bmatrix},
    \een
    where 
    \ben
    \Rbf = \begin{bmatrix}
        0 & \cdots, & 0 & 1 \\
        0 & \cdots, & 1 & 0 \\
        \vdots & \reflectbox{$\ddots$} & \vdots & \vdots \\
        1 & \cdots, & 0 & 0 \\
    \end{bmatrix}.
    \een
    Finally replace the entries of $\Tilde{\Hbf}$ with the following $2\times 2$ matrices to obtain a Hadamard matrix $\Hbf$ of order $8n$:
   \bea
    &\pm 1 \mapsto \pm \begin{bmatrix}
        1 & 1 \\ 1 & -1
    \end{bmatrix}, \quad
    &\pm i \mapsto \pm \begin{bmatrix}
        -1 & 1 \\ 1 & 1
    \end{bmatrix}, \\
    &\pm j \mapsto \pm \begin{bmatrix}
        1 & 1 \\ -1 & 1
    \end{bmatrix}, \quad
    & \pm ij \mapsto \pm \begin{bmatrix}
    1 & -1 \\ 1 & 1
    \end{bmatrix}.
    \eea
    
\end{Proposition}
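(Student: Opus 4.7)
The plan is to prove $\Hbf\Hbf^T = 8n\Ibf_{8n}$ in two stages: first, show that, as a matrix over $\Rnum[SP_2]$, $\Tilde{\Hbf}\Tilde{\Hbf}^T = 4n\Ibf_{4n}$, where $^T$ is entrywise $SP_2$-transposition composed with ordinary matrix transposition; second, show that the replacement step is exactly right-multiplication by $\Ibf_{4n}\otimes H_0$ with $H_0 = \bigl(\begin{smallmatrix}1 & 1\\ 1 & -1\end{smallmatrix}\bigr)$, which doubles the scalar since $H_0 H_0^T = 2\Ibf_2$.

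The starting point is translating the GCS quad condition into a circulant identity. From \eqref{eq:ac_and_pc} and the aperiodic complementarity $\sum_\ell R_{a_\ell}(\tau) = 4n\delta_{\tau,0}$, the periodic autocorrelations also sum to $4n\delta_{\tau,0}$; a direct calculation then gives $(\Xbf\Xbf^*)_{pr} = C_x(r-p)$ for the circulant $\Xbf$ built from a sequence $\xbf$, where $\Xbf^*$ denotes the conjugate transpose (which is still circulant). Hence $\Abf\Abf^* + \Bbf\Bbf^* + \Cbf\Cbf^* + \Dbf\Dbf^* = 4n\Ibf_n$. The rest of the argument rests on three elementary identities in $\Rnum[SP_2]$: (a) the twist $z\cdot j = j\cdot\overline{z}$ for every $z\in S_\Cnum$, which is just $ij = -ji$ promoted entrywise to the group ring; (b) the reversal-circulant identity $\Xbf\Rbf = \Rbf\Xbf^t$ for every circulant $\Xbf$, where $^t$ is ordinary matrix transposition without entrywise conjugation; and (c) $\Rbf^2 = \Ibf_n$ together with $j^2 = 1$.

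Using (a)--(c), I would compute $\Tilde{\Hbf}\Tilde{\Hbf}^T$ block by block. Since $(\Xbf\Rbf j)^T = j\Rbf\Xbf^*$ by (a), each diagonal summand $(\Xbf\Rbf j)(\Xbf\Rbf j)^T$ telescopes, via $j^2 = 1$ and $\Rbf^2 = \Ibf_n$, down to $\Xbf\Xbf^*$, so every diagonal block equals $\Abf\Abf^* + \Bbf\Bbf^* + \Cbf\Cbf^* + \Dbf\Dbf^* = 4n\Ibf_n$. For an off-diagonal block the four cross-terms split into two kinds: mixed terms that retain a factor of $\Rbf$ after the two $j$-factors fuse, which after (a), (b) and circulant commutativity reduce to $\Rbf$ times a commutator of circulants and therefore vanish; and pure terms of the form $(\Xbf\Rbf j)(\Ybf^*\Rbf j)^T = \Xbf\Ybf$ arising from the $^*$-entries, whose opposite-sign partners produce $\Xbf\Ybf - \Ybf\Xbf = 0$ directly from commutativity of circulants. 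Verifying one representative off-diagonal block and invoking the block symmetry of $\Tilde{\Hbf}$ then yields $\Tilde{\Hbf}\Tilde{\Hbf}^T = 4n\Ibf_{4n}$.

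Finally, a direct check shows that the replacement table coincides with the map $g\mapsto M(g)\,H_0$, where $M$ is the matrix representation \eqref{eq:isomorphism} of $SP_2$. Writing $\Tilde{\Hbf}_{\mathrm{exp}}$ for the $8n\times 8n$ matrix obtained by expanding each $SP_2$-entry of $\Tilde{\Hbf}$ via $M$, we have $\Hbf = \Tilde{\Hbf}_{\mathrm{exp}}(\Ibf_{4n}\otimes H_0)$, and the first stage yields $\Tilde{\Hbf}_{\mathrm{exp}}\Tilde{\Hbf}_{\mathrm{exp}}^T = 4n\Ibf_{8n}$; using $H_0 H_0^T = 2\Ibf_2$ then gives $\Hbf\Hbf^T = 8n\Ibf_{8n}$, and the $\pm 1$-entries follow from inspection of the table. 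I expect the main obstacle to be the bookkeeping in the off-diagonal block cancellations: a single misapplied commutation between $j$ and a complex-valued entry flips a sign, so one must track carefully that (a) is invoked precisely when pushing $j$ past a scalar in $S_\Cnum$, and verify that the positions of the $^*$-superscripts inside $\Tilde{\Hbf}$ are exactly those that force the two types of cross-terms to reduce to commutators of circulants.
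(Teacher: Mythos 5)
The paper does not actually prove this proposition --- it is quoted verbatim from \cite{craigen2002complex} ``for ease of reference'' --- so there is no in-paper proof to compare against; your proposal is, in effect, supplying the missing verification, and it is correct and follows the standard Goethals--Seidel-type argument. The two-stage plan is sound: the identities (a) $zj=j\overline z$, (b) $\Xbf\Rbf=\Rbf\Xbf^{t}$ for circulants, (c) $\Rbf^2=\Ibf$, $j^2=1$ do make every diagonal block collapse to $\Abf\Abf^{*}+\Bbf\Bbf^{*}+\Cbf\Cbf^{*}+\Dbf\Dbf^{*}=4n\Ibf_n$ (using $C_{x^*}=C_x$ for the blocks built from starred circulants), and every off-diagonal block splits into the two kinds of cross-terms you describe, each vanishing as a commutator of circulants; the replacement table is indeed $g\mapsto M(g)H_0$ with $H_0H_0^{T}=2\Ibf_2$, giving $\Hbf\Hbf^{T}=8n\Ibf_{8n}$. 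Two small points to tighten: first, your formula $(\Xbf\Rbf j)(\Ybf^{*}\Rbf j)^{T}=\Xbf\Ybf$ is exact only if $\Ybf^{*}$ denotes the conjugate transpose of $\Ybf$; if it denotes the circulant of the flipped-conjugated sequence $\ybf^{*}$, a residual cyclic-shift permutation $\Pbf$ appears, i.e.\ the product is $\Xbf\Ybf\Pbf$ --- harmless, since $\Pbf$ is itself circulant and factors out of the commutator, but worth stating. Second, your description of the ``mixed'' terms as arising ``after the two $j$-factors fuse'' is slightly off: those are precisely the terms with only \emph{one} $j$ and one $\Rbf$ (e.g.\ $\Abf(\Bbf\Rbf j)^{T}$), which is why an $\Rbf$ and a trailing $j$ survive into the commutator; the substance of the cancellation is nevertheless right. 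Finally, one should note that the quad must have no zero entries (Definition \ref{def:GCS} permits them) for the output to be a genuine $\pm1$ matrix.
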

Based on this construction and the GCS quads constructed in Section \ref{SEC:gcs}, we have the following corollary.
\begin{Corollary} \label{coro:hadamard_quad}
    There exists Hadamard matrices of order $8n$ for any $n \in \mathcal{S}_2^D$.
\end{Corollary}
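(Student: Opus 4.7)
The plan is to combine two already-established results from the paper. By the construction developed in Section~\ref{sec:gcs:cbs}, every length $n \in \mathcal{S}_2^D$ is realizable as the common length of a 4-phase GCS quad: elements of $\mathcal{S}_2$ from Corollary~\ref{coro:gcs_set} arise from Theorem~\ref{thm:compromise} with $k=2$ applied to two 4-phase GCS pairs, which produces four sequences of a single length $s(t+u)$, while the further enlargement to $\mathcal{S}_2^D$ uses Proposition~\ref{prop:pre_quad}, whose output $\{\pbf,\qbf,\rbf,\sbf\}$ is explicitly stated to be a 4-phase GCS quad of equal sequence length. Thus the first step is simply to invoke this machinery to produce, for the given $n \in \mathcal{S}_2^D$, a quad $\{\abf,\bbf,\cbf,\dbf\}$ with $|\abf|=|\bbf|=|\cbf|=|\dbf|=n$.

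Next, given such an equal-length GCS quad, I would feed it directly into Proposition~\ref{prop:gs}. That proposition forms the four circulant $n\times n$ matrices $\Abf,\Bbf,\Cbf,\Dbf$, assembles them into a $4n\times 4n$ matrix $\Tilde{\Hbf}$ over $SP_2$ via a Goethals--Seidel-type array, and then replaces each $SP_2$ entry by the prescribed $2\times 2$ real $\{\pm 1\}$-matrix; the result is a Hadamard matrix of order $8n$. No additional properties of the quad beyond its common sequence length are required, so the implication is immediate.

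There is essentially no technical obstacle, since the corollary is a direct composition of the GCS existence result of Section~\ref{sec:gcs:cbs} with the Hadamard construction of Proposition~\ref{prop:gs}. The only point worth a brief sanity check is that the enlargement of $\mathcal{S}_2$ to $\mathcal{S}_2^D$ never silently replaces equal-length quads by CBS with unequal arms: Proposition~\ref{prop:pre_quad} may take as input a $CBS(s_1,s_2)$ with $s_1\neq s_2$, but its output is by design an equal-length GCS quad, which is precisely the object Proposition~\ref{prop:gs} consumes to yield the claimed $8n\times 8n$ Hadamard matrix.
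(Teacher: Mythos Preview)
Your proposal is correct and follows exactly the paper's own reasoning: the corollary is presented without separate proof, simply as the immediate combination of Proposition~\ref{prop:gs} (Hadamard matrix of order $8n$ from an equal-length $4$-phase GCS quad of length $n$) with the equal-length GCS quads produced in Section~\ref{sec:gcs:cbs}. Your sanity check that Proposition~\ref{prop:pre_quad} outputs an equal-length quad (so that Proposition~\ref{prop:gs} applies) is apt and completes the argument.
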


Similar to the sparsity of $\mathcal{S}_2$ in Property \ref{property:density}, one may prove the sparsity of $\mathcal{S}_2^D$ (numerically illustrated by the dashed line in Fig. \ref{fig:density}). Hence the existence of Hadamard matrices in Corollary \ref{coro:hadamard_quad} is far from the Hadamard conjecture. 

Note that the $4$-phase GCS sets constructed in Section \ref{SEC:gcs}
have arbitrary length, one may expect to utilize them to construct more Hadamard matrices. However, it is unknown whether a Hadamard matrix of order $2Ln$ can be constructed from a $4$-phase GCS set of cardinality $L>4$ with sequence length $n$. 

Alternatively, we follow the path of \cite{craigen1995signed} to construct Hadamard matrices by using the representation theory of signed symmetric group, and propose a construction different from the original method in \cite{craigen1995signed}. In doing so, we improve the asymptotic existence of Hadamard matrices, as summarized in the following theorem.

\begin{theorem} \label{thm:main}
    There exist block-circulant Hadamard matrices of order $2^t m$ with block size $2^{t-2}$ for any odd number $m$, where $t= 6\lfloor \frac{1}{40}\log_{2}m\rfloor + 10$.
\end{theorem}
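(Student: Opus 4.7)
The strategy is to reduce the construction of a block-circulant Hadamard matrix of order $2^t m$ with block size $2^{t-2}$ to the construction of a perfect sequence of length $4m$ over the signed symmetric group $SP_v$ with $v = 2^{t-2}$. The Hadamard identity $\mathbf{H}\mathbf{H}^T = 2^t m\, \mathbf{I}$ decomposes, block by block, into the perfect-autocorrelation condition $C_a(\tau) = 4m \cdot \mathbf{I}_v \delta_{\tau,0}$ over $SP_v$, provided $SP_v$ is mapped into a Hadamard-like family of $\pm 1$ matrices extending the $SP_2$ representation used in Proposition~\ref{prop:gs}. This identifies the target as a purely combinatorial object in the framework of Section~\ref{sec:pre.spv}.

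For odd $m$, I would build this length-$4m$ perfect sequence over $SP_v$ hierarchically, in the style of Figure~\ref{fig:hier}. Writing $m$ in a base $P^{\ast} \leq 10^{13}/4$, the number of nonzero digits satisfies $r \leq \lceil \log_2 m / 40 \rceil$ up to a small additive constant. Each nonzero digit $n_i \leq 10^{13}$ admits a $4$-phase GCS octet of length $n_i$ thanks to the coverage of $\mathcal{S}_3^D$ established in Section~\ref{sec:gcs:cbs}. These octets are then embedded into the length-$4m$ ambient slot through disjoint supports and multiplication by distinct basis elements of $SP_v$; the embedding~\eqref{eq:embedding} allows lifting from $SP_{v'}$ into $SP_{2v'}$ at each merging level. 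Disjointness of supports plus the GCS complementarity~\eqref{ACondition}, combined with the relation~\eqref{eq:ac_and_pc}, aggregate into the perfect autocorrelation of the combined sequence.

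Tracking the growth of $v$, a single GCS octet requires a base block size of $v_0 = 2^8$, which supplies the additive constant $10$ in $t$. Each merging level in Figure~\ref{fig:hier} doubles the number of octets processed; I expect to pay for this doubling by multiplying $v$ by a factor of $2^6 = 64$, thereby absorbing one factor of $2^{40}$ in $m$ and producing the slope $6/40$ in the final bound. The principal obstacle will be engineering this $2^6$-per-level lift: one must combine two perfect sequences over $SP_{v'}$ into a single perfect sequence over $SP_{64 v'}$ while preserving the supplementary and quasi-symmetric support conditions from Section~\ref{sec:pre.spv} and correctly handling the noncommutative conjugation rule $\overline{a} \leftrightarrow a^T$. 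Once this merging module is in hand, the recursion together with the $\mathcal{S}_3^D$ coverage yields the stated $t = 6 \lfloor \log_2 m / 40 \rfloor + 10$.
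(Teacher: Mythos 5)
Your opening reduction is correct and coincides with the paper's first move: by Lemma \ref{lem:sg_hadamard}, a perfect sequence of length $4m$ over $SP_{2^{t-2}}$ (with $2^{t-2}$ a Sylvester order) yields a block-circulant Hadamard matrix of order $2^{t-2}\cdot 4m = 2^t m$ with block size $2^{t-2}$. The gap is in how you propose to produce that perfect sequence. You want to take, for each nonzero base-$P$ digit of $m$, a $4$-phase GCS \emph{octet} of that length (available because $\mathcal{S}_3^D$ covers all lengths up to $10^{13}$) and then merge these octets via disjoint supports and a ``$2^6$-per-level lift.'' But no mechanism is known for converting a GCS set of cardinality $8$ into a perfect sequence over a signed symmetric group: the eight sequences of an octet from Section \ref{SEC:gcs} all have full support, so they are neither disjoint nor quasi-symmetric, and their aperiodic complementarity does not survive an arbitrary embedding into a common length-$4m$ slot. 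The paper explicitly flags this as open (``it is unknown whether a Hadamard matrix of order $2Ln$ can be constructed from a $4$-phase GCS set of cardinality $L>4$'') and therefore does \emph{not} use the octets of Section \ref{SEC:gcs} here at all. Instead it feeds only $4$-phase GCS pairs and CBS into Theorem \ref{thm:disjoint_symmetric}, which arranges them (with symmetric zero-padding) into $2k+4d$ quasi-symmetric, \emph{supplementary} sequences over $\{0\}\cup S_{\Cnum}$ of length $n=4\sum_i l_im_i+4\sum_i(s_{2i-1}+s_{2i})t_i$ whose periodic autocorrelations are complementary; these are then absorbed one at a time by Theorem \ref{thm:seq_combine}, each absorption only doubling $v$, ending at $v=2^{2k+4d}$ (Corollary \ref{coro:exist_hadamard}). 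The ``merging module'' you yourself name as the principal obstacle is exactly the missing content, and it is not supplied.

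Your accounting of the constants also does not come from where you say it does. The base is not tied to $10^{13}$ (the $\mathcal{S}_3^D$ coverage plays no role in Section \ref{sec:hadamard}); it is $P=2^{40}$, obtained from the computational verification in Table \ref{tab:bnk} that every integer of the form $2^4b$ with $b\le 2^{40}$ is realizable as $\sum_i l_im_i+\sum_i(s_{2i-1}+s_{2i})t_i$ using ingredients with $\gamma=2k+4d=6$, and every $b\le 2^{44}$ with $\gamma=8$. The slope $6/40$ is then ``$\gamma$-cost $6$ per nonzero base-$2^{40}$ digit,'' summed over at most $\lfloor\frac{1}{40}\log_2 m\rfloor$ high digits, and the additive $10$ is the cost $8$ of the units digit plus $2$ from the factor of $4$ in \eqref{eq:total_len} --- not a base block size of $2^8$ per octet, and not a $2^6$ blow-up per level of a Figure~\ref{fig:hier}-style binary tree (the digits enter as a flat sum in \eqref{eq:total_len}, not hierarchically). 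Even granting your merging module, combining the $8$ sequences of one octet by successive doublings would cost $2^8$ per digit and give a worse slope of $8$ per digit, so the approach as sketched would not reach the stated bound.
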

In comparison, the best known result was $t=6\lfloor \frac{1}{26}\log_2\frac{m-1}{2}\rfloor +11$ \cite{livinskyi2012asymptotic}. The asymptotic results ignoring the floor function are compared in Fig. \ref{fig:asymptotic}. Note that the yellow line with $t=2$ corresponds with the Hadamard conjecture: if $t = 2$ for any odd number $m$, then there exist Hadamard matrices $\Hbf$ of order $4m$ for any odd number $m$, and by Sylwester's construction \cite{seberry2020hadamard} a Hadamard matrix of order $4\cdot (2m)$ can be constructed as 
\ben
\begin{bmatrix}
    \Hbf & \Hbf \\
    \Hbf & -\Hbf
\end{bmatrix},
\een
thus there exist Hadamard matrices of order $4n$ for any positive integer $n$.

\begin{figure}[ht!]
\centering
\includegraphics[width=3.3in]{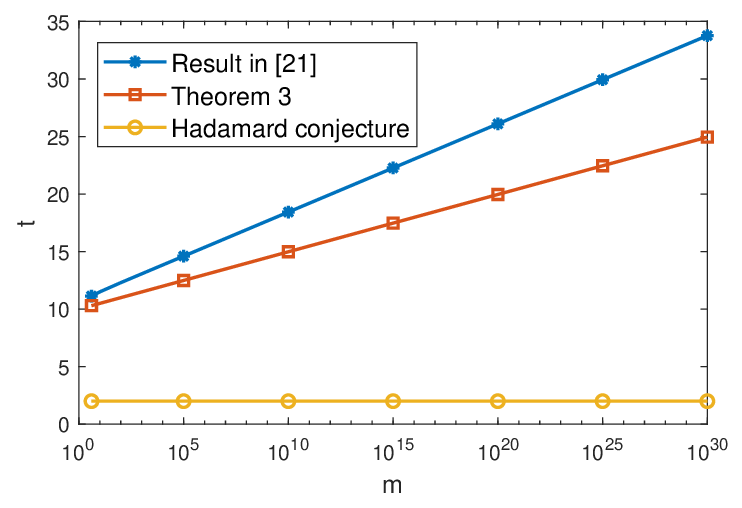}
\caption{Asymptotic Existence of Hadamard Matrices. (Note that $m$ is odd.)}
\label{fig:asymptotic}
\end{figure}

The remaining part of this section is devoted to the proof of Theorem \ref{thm:main}. In Section \ref{subsec:bridge}, we introduce a connection between a block-circulant Hadamard matrix and a perfect sequence over a signed symmetric group; In Section \ref{subsec:seq_combine}, we propose a method to combine two sequences into one sequence and preserve the autocorrelation property; In Section \ref{subsec:input} we construct a series of periodic complementary sequences with specific structure; In Section \ref{subsec:main} we collect the results in Section \ref{subsec:bridge}, Section \ref{subsec:seq_combine} and Section \ref{subsec:input} to complete the proof of Theorem \ref{thm:main}.

\subsection{Hadamard Matrix from Perfect Sequence} \label{subsec:bridge}
\cite[Theorem 5]{craigen1995signed} built a connection between a perfect sequence over a signed symmetric group and a block-circulant Hadamard matrix, as reformulated in the following lemma.
\begin{lemma}[\cite{craigen1995signed}] \label{lem:sg_hadamard}
    If there exists a perfect sequence $\cbf$ over $SP_v$ of length $n$ and $v$ is a known order of Hadamard matrices, then there exists a block-circulant Hadamard matrix of order $vn$ with block size $v$.
\end{lemma}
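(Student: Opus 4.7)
The plan is to realize the sequence $\cbf$ as a block-circulant matrix and then ``inflate'' each block using a Hadamard matrix of the right size. Concretely, let $\Hbf_v$ be a Hadamard matrix of order $v$, which exists by hypothesis. I would form the $nv \times nv$ matrix
\[
\Mbf \;=\; \begin{bmatrix}
c_0 \Hbf_v & c_1 \Hbf_v & \cdots & c_{n-1}\Hbf_v \\
c_{n-1}\Hbf_v & c_0 \Hbf_v & \cdots & c_{n-2}\Hbf_v \\
\vdots & \vdots & \ddots & \vdots \\
c_1 \Hbf_v & c_2 \Hbf_v & \cdots & c_0 \Hbf_v
\end{bmatrix},
\]
which is block-circulant with $n \times n$ blocks of size $v$. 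Each block is of the form $c_i \Hbf_v$, where $c_i \in SP_v$ is identified with a signed permutation matrix; hence $c_i \Hbf_v$ is obtained from $\Hbf_v$ by permuting rows and flipping signs. In particular, every entry of $\Mbf$ lies in $\{1,-1\}$.

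Next I would verify the orthogonality $\Mbf \Mbf^T = vn\, \Ibf_{vn}$ by computing it blockwise. Writing $\tau = j - i$ and summing along a ``diagonal'' of the block-circulant structure, the $(i,j)$-block of $\Mbf \Mbf^T$ is
\[
\sum_{k} (c_{k-i}\Hbf_v)(c_{k-j}\Hbf_v)^T
= \sum_k c_{k-i} \Hbf_v \Hbf_v^T c_{k-j}^T
= v \sum_k c_{k-i} \overline{c}_{k-i-\tau},
\]
where I used $\Hbf_v \Hbf_v^T = v\Ibf_v$ and the convention from Section~\ref{sec:pre.spv} that the matrix transpose of a $SP_v$ element corresponds to its conjugate. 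The remaining sum is exactly the periodic autocorrelation $C_c(\tau)$, which by hypothesis equals $n$ for $\tau=0$ and $0$ otherwise. Via the embedding \eqref{eq:embedding}, this translates into $vn\, \Ibf_v$ on the block diagonal and $\bm{0}$ off-diagonal, yielding $\Mbf \Mbf^T = vn\, \Ibf_{vn}$.

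Combining these two observations, $\Mbf$ is a $\{\pm 1\}$-matrix of order $vn$ satisfying $\Mbf\Mbf^T = vn\,\Ibf_{vn}$, i.e., a Hadamard matrix of order $vn$ that is block-circulant with block size $v$, proving the lemma. The only subtle step is the translation between ``$\overline{\,\cdot\,}$'' on $SP_v$ and matrix transposition on the $v\times v$ representation, together with the embedding of $SP_v$ into block matrices of size $v$; once these identifications are made, the argument is a direct computation using $\Hbf_v \Hbf_v^T = v \Ibf_v$ and the perfect-autocorrelation assumption on $\cbf$. I do not anticipate a major obstacle beyond bookkeeping.
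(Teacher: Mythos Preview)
Your proposal is correct and follows essentially the same approach as the paper's sketch: the paper forms the block-circulant matrix $\Dbf$ by replacing each entry of the circulant matrix $\Cbf$ with its $v\times v$ signed-permutation-matrix representation and then sets $\Hbf=\Dbf(\Ibf_n\otimes\Hbf_v)$, which is exactly your matrix $\Mbf$ with $(i,j)$-block $c_{j-i}\Hbf_v$. Your blockwise verification of $\Mbf\Mbf^T=vn\,\Ibf_{vn}$ via $\Hbf_v\Hbf_v^T=v\Ibf_v$ and $C_c(\tau)=n\delta_{\tau,0}$ simply fills in the details the paper omits.
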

\begin{proof}[Sketch of proof]
    According to the method in \cite{craigen1995signed}, we can construct a circulant matrix $\Cbf$ over $SP_{v}$ of order $n$ from $\cbf$; then substitute the entries of $\Cbf$ with their representations of signed permutation matrices of order $v$ to obtain a $\{0, \pm 1\}$-matrix $\Dbf$ of order $vn$; finally let $\Hbf = \Dbf \left(\Ibf_{n} \otimes \Hbf_{v}\right)$ where $\Hbf_{v}$ is a known Hadamard matrix of order $v$, and we obtain $\Hbf$ as a Hadamard matrix of order $vn$.
\end{proof}

Motivated by this connection, we aim at constructing perfect sequences over $SP_{v}$ of flexible lengths $n$ --- of course $v$ should be as small as possible.

\subsection{Sequences Combining} \label{subsec:seq_combine}
First, we construct a sequence $\cbf$ over $\{0\}\cup SP_{2v}$ from two sequences $\abf$ and $\bbf$ over $\{0\}\cup SP_{v}$, so that $R_{c}(\tau) = R_{c^*}(\tau) = R_{a}(\tau) + R_{b}(\tau)$.

\begin{theorem} \label{thm:seq_combine}
    Suppose $\abf$ and $\bbf$ are two disjoint and quasi-symmetric sequences over $\{0\}\cup SP_{v}$ of length $n$, and 
    \ben \label{cond:normal} R_{a}(\tau) = R_{a^*}(\tau),\ R_{b}(\tau) = R_{b^*}(\tau); \een
    \ben \label{cond:commu} a_i b_j = b_j a_i,\ \forall\, i, j \in \Znum_n. 
    \een
    Construct two sequences $\xbf$ and $\ybf$ of length $n$ as follows:
    \ben \label{eq:sp_2m}
    x_i = \begin{bmatrix}
        a_i & 0 \\ 0 & \overline{a}_{n-i-1}
    \end{bmatrix}, \ 
    y_i = \begin{bmatrix}
        0 & b_i \\ -\overline{b}_{n-i-1} & 0
    \end{bmatrix},\ i \in \Znum_n.
    \een
    Let $\cbf = \xbf+\ybf$, then $\cbf$ is a quasi-symmetric sequence over $\{0\} \cup SP_{2v}$ with $supp(\cbf) = supp(\abf) \cup supp(\bbf)$, and
    \ben \label{eq:corr_combine}
    R_{c}(\tau) = R_{c^*}(\tau) = R_{a}(\tau) + R_{b}(\tau).
    \een
\end{theorem}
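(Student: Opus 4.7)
The plan is to verify three things: (i) every $c_i$ lies in $\{0\}\cup SP_{2v}$, (ii) $\cbf$ is quasi-symmetric with $supp(\cbf)=supp(\abf)\cup supp(\bbf)$, and (iii) the correlation identity $R_c(\tau)=R_{c^*}(\tau)=R_a(\tau)+R_b(\tau)$.

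For (i) and (ii) I would argue by cases. Because $\abf$ and $\bbf$ are disjoint, at each index $i$ at most one of $a_i,b_i$ is nonzero. Combining this with the quasi-symmetry of each support, if $a_i\ne 0$ then $a_{n-i-1}\ne 0$ while $b_i=b_{n-i-1}=0$, so $y_i=0$ and $x_i$ is block-diagonal with both $v\times v$ blocks in $SP_v$, hence $c_i\in SP_{2v}$. The case $b_i\ne 0$ is symmetric, giving an off-diagonal signed permutation. The case $a_i=b_i=0$ forces $a_{n-i-1}=b_{n-i-1}=0$ by quasi-symmetry, so $c_i=0$. This immediately yields $supp(\cbf)=supp(\abf)\cup supp(\bbf)$, and quasi-symmetry of $\cbf$ is inherited from that of $\abf$ and $\bbf$.

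For (iii) I would expand
\[
c_i\,\overline{c}_{i-\tau}=x_i\overline{x}_{i-\tau}+y_i\overline{y}_{i-\tau}+x_i\overline{y}_{i-\tau}+y_i\overline{x}_{i-\tau}.
\]
Using the convention that $\overline{(\cdot)}$ corresponds to the block transpose, a direct $2\times 2$ block multiplication gives
\[
x_i\overline{x}_{i-\tau}=\begin{bmatrix}a_i\overline{a}_{i-\tau}&0\\0&\overline{a}_{n-i-1}a_{n-i+\tau-1}\end{bmatrix},
\]
\[
y_i\overline{y}_{i-\tau}=\begin{bmatrix}b_i\overline{b}_{i-\tau}&0\\0&\overline{b}_{n-i-1}b_{n-i+\tau-1}\end{bmatrix}.
\]
Summing on $i$ and substituting $j=n-i-1$ in the lower-right blocks, I get $\mathrm{diag}(R_a(\tau),R_{a^*}(\tau))$ and $\mathrm{diag}(R_b(\tau),R_{b^*}(\tau))$. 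By hypothesis \eqref{cond:normal} both diagonals are scalar, so via the embedding \eqref{eq:embedding} they identify with $R_a(\tau)$ and $R_b(\tau)$ in $\Rnum[SP_{2v}]$.

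The main technical step is showing that the cross-term sums vanish. The products $x_i\overline{y}_{i-\tau}$ and $y_i\overline{x}_{i-\tau}$ are off-diagonal, and the $(1,2)$-entry of their sum reads $-a_i b_{n-i+\tau-1}+b_i a_{n-i+\tau-1}$. Substituting $j=n-i+\tau-1$ in the first summand and using the commutativity hypothesis \eqref{cond:commu} $a_j b_i = b_i a_j$, the two sums coincide and cancel; the same reindexing kills the $(2,1)$-entry. Hence $R_c(\tau)=R_a(\tau)+R_b(\tau)$. The identical computation carried out for $R_{c^*}(\tau)=\sum_i\overline{c}_{n-1-i}c_{n-1-i+\tau}$ (again using \eqref{cond:normal} for the block-diagonal part and \eqref{cond:commu} for the off-diagonal part) yields the same value. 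The hard part is the bookkeeping of indices $n-i-1$ versus $n-i+\tau-1$ in the cross-term cancellation; everything else is mechanical block multiplication together with the two standing hypotheses \eqref{cond:normal} and \eqref{cond:commu}.
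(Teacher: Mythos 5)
Your proof is correct and follows essentially the same route as the paper: the support and quasi-symmetry case analysis is identical, and your entrywise expansion of $c_i\overline{c}_{i-\tau}$ into four block products --- diagonal terms giving $\mathrm{diag}(R_a(\tau),R_{a^*}(\tau))$ and $\mathrm{diag}(R_b(\tau),R_{b^*}(\tau))$, which collapse to scalars by \eqref{cond:normal}, and cross terms cancelling after reindexing via \eqref{cond:commu} --- is exactly the content of the paper's Lemma \ref{lem:orthogonal} applied to the $2\times 2$ polynomial matrix $c(z)$ with rows $(a(z),\,b(z))$ and $(-b^*(z),\,a^*(z))$, just unrolled at the level of correlation sums. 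The paper's packaging via $c(z)c^*(z)=c^*(z)c(z)=\bigl(a(z)a^*(z)+b(z)b^*(z)\bigr)\Ibf$ obtains $R_c$ and $R_{c^*}$ in one stroke, but your index bookkeeping is sound and reaches the same conclusion.
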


An example of Theorem \ref{thm:seq_combine}: let $\abf = [0, -i, 0, -1, 0]$, $\bbf = [1, 0, 0, 0, -i]$, then 
\ben
c_0 = \begin{bmatrix}
    0 & b_0 \\ -\overline{b}_4 & 0
\end{bmatrix} = \begin{bmatrix}
    0 & 1 \\ -i & 0
\end{bmatrix} = \begin{bmatrix}
    0 & 0 & 1 & 0 \\
    0 & 0 & 0 & 1 \\
    0 & 1 & 0 & 0 \\
    -1 & 0 & 0 & 0
\end{bmatrix},
\een
\ben
c_1 = \begin{bmatrix}
    a_1 & 0 \\ 0 & \overline{a}_3
\end{bmatrix} = \begin{bmatrix}
    -i & 0 \\ 0 & -1
\end{bmatrix} = \begin{bmatrix}
    0 & 1 & 0 & 0 \\
    -1 & 0 & 0 & 0 \\
    0 & 0 & -1 & 0 \\
    0 & 0 & 0 & -1
\end{bmatrix},
\een
\ben
c_2 = \begin{bmatrix}
    a_2 & 0 \\ 0 & \overline{a}_2
\end{bmatrix} = \begin{bmatrix}
    0 & 0 \\ 0 & 0
\end{bmatrix}  = \begin{bmatrix}
    0 & 0 & 0 & 0 \\
    0 & 0 & 0 & 0 \\
    0 & 0 & 0 & 0 \\
    0 & 0 & 0 & 0
\end{bmatrix},
\een
\ben
c_3 = \begin{bmatrix}
    a_3 & 0 \\ 0 & \overline{a}_1
\end{bmatrix} = \begin{bmatrix}
    -1 & 0 \\ 0 & i
\end{bmatrix} = \begin{bmatrix}
    -1 & 0 & 0 & 0 \\
    0 & -1 & 0 & 0 \\
    0 & 0 & 0 & -1 \\
    0 & 0 & 1 & 0
\end{bmatrix},
\een
\ben
c_4 = \begin{bmatrix}
    0 & b_4 \\ -\overline{b}_0 & 0
\end{bmatrix} = \begin{bmatrix}
    0 & -i \\ -1 & 0
\end{bmatrix} = \begin{bmatrix}
    0 & 0 & 0 & 1 \\
    0 & 0 & -1 & 0 \\
    -1 & 0 & 0 & 0 \\
    0 & -1 & 0 & 0 \\
\end{bmatrix}.
\een
Obviously, $\cbf$ is a quasi-symmetric sequence over $\{0\} \cup SP_{4}$ with $supp(\cbf) = supp(\abf) \cup supp(\bbf)$. Besides, direct calculations show that 
    \ben \label{eq:Rab}
    R_{a}(\tau) + R_{b}(\tau) = [i, 0, i, 0, 4, 0, -i, 0, -i]
    \een
and
\ben \label{eq:Rc}
R_c(\tau) = R_c^*(\tau) = \begin{cases}
    \Mbf_1, & \tau = 0 \\
    \Mbf_2, & \tau = -1, -3 \\
    \Mbf_3, & \tau = -2, -4 \\
    -R_c(\tau), & 1\leq \tau \leq 4
\end{cases}\ ,
\een
where
\bea 
\Mbf_1 &= 4\begin{bmatrix}
    1 & 0 & 0 & 0 \\
    0 & 1 & 0 & 0 \\
    0 & 0 & 1 & 0 \\
    0 & 0 & 0 & 1
\end{bmatrix},\ 
\Mbf_2 = \begin{bmatrix}
    0 & 0 & 0 & 0 \\
    0 & 0 & 0 & 0 \\
    0 & 0 & 0 & 0 \\
    0 & 0 & 0 & 0
\end{bmatrix}, \\ 
\Mbf_3 &= \begin{bmatrix}
    0 & -1 & 0 & 0 \\
    1 & 0 & 0 & 0 \\
    0 & 0 & 0 & -1 \\
    0 & 0 & 1 & 0
\end{bmatrix}.
\eea
According to the isomorphism in \eqref{eq:isomorphism} and the embedding in \eqref{eq:embedding}, \eqref{eq:Rab} is equal to \eqref{eq:Rc}.

The following lemma is crucial for proving Theorem \ref{thm:seq_combine}.
\begin{lemma} \label{lem:orthogonal}
    Given $a, b$ two elements in a ring satisfying $ab=ba$, $aa^* = a^*a$ and $bb^* = b^*b$ where $*$ is an anti-involution.
    Let
    \ben \label{eq:2_2_mat}
    \Cbf = \begin{bmatrix}
        a&b\\ -b^*&a^*
    \end{bmatrix}.
    \een
    Then $\Cbf$ is an orthogonal $2\times 2$ matrix, i.e., 
    \bea \label{eq:orthognal}
    \Cbf\Cbf^* =& \begin{bmatrix}
        a&b\\ -b^*&a^*
    \end{bmatrix} 
    \begin{bmatrix}
        a^*&-b\\ b^*&a
    \end{bmatrix} = (aa^*+bb^*) \begin{bmatrix}
        1&0\\0&1
    \end{bmatrix},\\
    \Cbf^*\Cbf =& 
    \begin{bmatrix}
        a^*&-b\\ b^*&a
    \end{bmatrix} \begin{bmatrix}
        a&b\\ -b^*&a^*
    \end{bmatrix} = (aa^*+bb^*) \begin{bmatrix}
        1&0\\0&1
    \end{bmatrix}.
    \eea
\end{lemma}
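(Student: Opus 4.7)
The plan is to verify both identities in~\eqref{eq:orthognal} by a direct entrywise $2\times 2$ matrix computation, using the anti-involution properties $(xy)^* = y^*x^*$ and $(x^*)^* = x$, together with the three listed hypotheses $ab=ba$, $aa^*=a^*a$, and $bb^*=b^*b$. First I would record the form of $\Cbf^*$: transposing $\Cbf$ and applying $*$ entrywise gives
\[
\Cbf^* = \begin{bmatrix} a^* & -b \\ b^* & a \end{bmatrix},
\]
which already matches the second factor displayed in~\eqref{eq:orthognal}.

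For $\Cbf\Cbf^*$ I would compute all four entries. The $(1,1)$ entry is $aa^* + bb^*$, while the $(2,2)$ entry expands to $b^*b + a^*a$, which also equals $aa^* + bb^*$ by the two normality hypotheses. The $(1,2)$ entry is $-ab + ba$, vanishing by $ab=ba$. The $(2,1)$ entry is $-b^*a^* + a^*b^* = -(ab)^* + (ba)^*$ by the anti-involution identity, and therefore also vanishes because $ab = ba$. Hence the first half of~\eqref{eq:orthognal} drops out immediately.

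The computation for $\Cbf^*\Cbf$ is structurally identical on the diagonal: both diagonal entries again collapse to $aa^* + bb^*$ by normality. The off-diagonal entries reduce to $a^*b - ba^*$ and $b^*a - ab^*$, which I would dispose of using a Fuglede--Putnam-type observation in the ambient ring: in $\Rnum[SP_v]$, viewed as a subalgebra of real matrices with $*$ realized as transpose, the commutation $ab=ba$ of $b$ with the normal element $a$ forces $a^*b = ba^*$, and symmetrically $b^*a = ab^*$. With these cross terms gone, $\Cbf^*\Cbf$ also collapses to $(aa^*+bb^*)$ times the $2\times 2$ identity matrix, completing the proof.

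The four-entry matrix multiplications themselves are entirely mechanical; the one point I expect to need extra care is that the vanishing of the off-diagonal entries of $\Cbf^*\Cbf$ is \emph{not} a formal consequence of the three stated hypotheses in a generic noncommutative ring, and instead relies on the Fuglede-type additional commutation $a^*b = ba^*$, which is automatic in the signed symmetric group ring of interest here (and could alternatively be adopted as a fourth hypothesis if one wishes a purely ring-theoretic reading of the lemma).
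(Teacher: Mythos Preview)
The paper states Lemma~\ref{lem:orthogonal} without proof, treating it as a routine verification; your direct entrywise computation is exactly the natural argument, and your handling of $\Cbf\Cbf^*$ is correct and complete.

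Your flag on $\Cbf^*\Cbf$ is well taken. The off-diagonal entries $a^*b-ba^*$ and $b^*a-ab^*$ do \emph{not} vanish from the three listed hypotheses alone in an arbitrary ring with anti-involution: applying $*$ to $ab=ba$ only yields $a^*b^*=b^*a^*$, not the mixed commutation $a^*b=ba^*$. So as literally stated, the second identity of the lemma is not a formal consequence of the hypotheses. Your proposed repairs are both sound. In the paper's actual application (the proof of Theorem~\ref{thm:seq_combine}), the ring is a polynomial ring with coefficients in $\Rnum[SP_v]$, and the relevant coefficients $a_i$ are signed permutation matrices, hence orthogonal; for such matrices $\overline{a_i}=a_i^{-1}$, and $a_ib_j=b_ja_i$ immediately gives $\overline{a_i}\,b_j=a_i^{-1}b_j=b_ja_i^{-1}=b_j\,\overline{a_i}$, so the needed extra commutation holds coefficientwise and hence for the polynomials $a(z),b(z)$. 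This elementary inversion argument is perhaps cleaner than invoking Fuglede--Putnam, but your Fuglede route is also valid since $\Rnum[SP_v]\subset M_v(\Rnum)$ and real normal matrices satisfy Fuglede via complexification. Either way, you have correctly diagnosed that the lemma, read purely ring-theoretically, is missing the hypothesis $a^*b=ba^*$ (equivalently $ab^*=b^*a$), and that in the intended setting this hypothesis is automatic.
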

\begin{proof}[Proof of Theorem \ref{thm:seq_combine}]
    Because $\abf$ is quasi-symmetric, $a_i = \overline{a}_{n-i-1} = 0$ or $a_i \neq 0,  \overline{a}_{n-i-1} \neq 0$. Hence $x_i \in \{0\} \cup SP_{2v}$. Similarly $y_i \in \{0\} \cup SP_{2v}$. Note that $supp(\xbf) = supp(\abf)$ and $supp(\ybf) = supp(\bbf)$. Hence $\xbf$ and $\ybf$ are quasi-symmetric and disjoint as $\abf$ and $\bbf$. Then $\cbf = \xbf + \ybf$ is a quasi-symmetric sequence over $\{0\} \cup SP_{2v}$ and $supp(\cbf) = supp(\abf) \cup supp(\bbf)$. 

    Next we prove that $ R_{c}(\tau) = R_{c^*}(\tau) = R_{a}(\tau) + R_{b}(\tau)$ by specifying the ring in Lemma \ref{lem:orthogonal} as a polynomial ring with coefficients over a signed symmetric group ring, and the anti-involution as the map from $a(z)$ to $a^*(z)$. Note that 
    \ben
    c(z) = \begin{bmatrix}
        a(z) & b(z)\\ -b^*(z) & a^*(z)
    \end{bmatrix},
    \een
    and we have $a(z)b(z) = b(z)a(z)$ by \eqref{cond:commu}, $a(z)a^*(z) = a^*(z)a(z)$ and $b(z)b^*(z) = b^*(z)b(z)$ by \eqref{eq:norm_and_corr} and \eqref{cond:normal}. In Lemma \ref{lem:orthogonal}, let $a=a(z)$, $b=b(z)$, $\Cbf = c(z)$, then we have
    \ben \label{eq:norm_c}
    c(z)c^*(z) = c^*(z)c(z) = \left(a(z)a^*(z) + b(z)b^*(z)\right) \begin{bmatrix}
        1&0\\0&1
    \end{bmatrix}.
    \een
    By the embedding in \eqref{eq:embedding}, \eqref{eq:norm_c} is equivalent to 
    \ben
    c(z)c^*(z) = c^*(z)c(z) = a(z)a^*(z) + b(z)b^*(z),
    \een
    i.e., $ R_{c}(\tau) = R_{c^*}(\tau) = R_{a}(\tau) + R_{b}(\tau)$. 

\end{proof}

\begin{remark} \label{remark:corr_sum}
    The original construction in \cite[Lemma 2]{craigen1995signed} was presented in the language of matrices, while ours is presented in the language of sequences. Both constructions are based on the $2\times 2$ orthogonal matrix in Lemma \ref{lem:orthogonal}. In \cite[Lemma 2]{craigen1995signed}, the ring is specified as a matrix ring and the anti-involution transposes and conjugates a matrix, while our method specifies the ring as a polynomial ring and the anti-involution flips and conjugates a sequence. This leads to different definitions of "quasi-symmetric": in the context of \cite[Lemma 2]{craigen1995signed}, a sequence composing a circulant matrix is quasi-symmetric if the support of the remaining sequence is symmetric after deleting the first element, e.g., $\abf = [1, 0, 1, 1, 0]$ is quasi-symmetric; while in the context of this work, a sequence is quasi-symmetric if the support of the sequence is symmetric, e.g., $\abf = [0, 1, 1, 0]$ is quasi-symmetric. Besides, \cite[Lemma 2]{craigen1995signed} focuses on the periodic property while Theorem \ref{thm:seq_combine} focuses on the aperiodic property, which is more general according to \eqref{eq:ac_and_pc}.
\end{remark}

\subsection{Periodic Complementary Sequences with Quasi-symmetric and Supplementary Structure} \label{subsec:input}
Next, we construct a series of sequences over $\{0\} \cup S_{\Cnum}$ from $4$-phase GCS pairs and CBS, so that they can be fed sequentially into Theorem \ref{thm:seq_combine} to obtain a perfect sequence over a signed symmetric group.
\begin{theorem} \label{thm:disjoint_symmetric}
    Given $4k+8d$ sequences $\ebf_i, \fbf_i, \gbf_i, \hbf_i$, $i=1, 2, \cdots k+2d$, satisfying 
    \begin{enumerate}
        \item $\{\ebf_i, \fbf_i\}$ is a 4-phase GCS pair with sequence length $l_i$ for $\forall\, 1\leq i \leq k$, and $\{\ebf_{k+2i-1}, \fbf_{k+2i-1}, \ebf_{k+2i}, \fbf_{k+2i}\} \in CBS(s_{2i-1}, s_{2i})$ for $\forall\, 1\leq i \leq d$.
        \item $\{\gbf_i, \hbf_i\}$ is a 4-phase GCS pair for $\forall\, 1\leq i \leq k+2d$, where the length of $\gbf_i, \hbf_i$ is $m_i$ for $\forall\, 1 \leq i\leq k$, and $\gbf_{k+2i-1}, \hbf_{k+2i-1}, \gbf_{k+2i}, \hbf_{k+2i}$ must have the same length $t_i$ for $\forall\, 1\leq i \leq d$.
    \end{enumerate}
    Let 
    \ben \label{eq:total_len}
    n \triangleq 4\sum_{i=1}^{k} l_i m_i + 4\sum_{i=1}^{d} \left(s_{2i-1} + s_{2i}\right) t_i,
    \een
    \ben
    \lambda_i \triangleq
    \begin{cases}
        \sum\limits_{j=1}^{i-1} l_j m_j,& \ 1\leq i\leq k+1\\
        \lambda_{k+1}+\sum\limits_{j=1}^{i-k-1} s_{j}t_{\lceil j/2\rceil},&\ k+2\leq i \leq k+2d+1
    \end{cases}.
    \een
    Construct 
    \bea \label{eq:sparse_seqs}
    \abf_i =&\ {\bf 0}^{(\lambda_i)}\,\vert\,\ebf_i \otimes \gbf_i\,\vert\,{\bf 0}^{(n-2\lambda_{i+1})}\,\vert\,\fbf_i \otimes \hbf_i\,\vert\,{\bf 0}^{(\lambda_i)},\\
    \bbf_i =&\ {\bf 0}^{(\frac{n}{2}-\lambda_{i+1})}\,\vert\,\underline{\ebf_i^*} \otimes \hbf_i\,\vert\,{\bf 0}^{(2\lambda_{i})}\,\vert\,\fbf_i^* \otimes \gbf_i\,\vert\,{\bf 0}^{(\frac{n}{2}-\lambda_{i+1})}.
    \eea
    Then $\abf_1, \bbf_1, \cdots, \abf_{k+2d}, \bbf_{k+2d}$ are quasi-symmetric and supplementary sequences of length $n$ over $\{0\} \cup S_{\Cnum}$, and 
    \ben
    \sum_{i=1}^{k+2d} C_{a_i} (\tau) +  C_{b_i} (\tau) = \begin{cases}
        n,& \tau = 0\\
        0,& \tau \neq 0
    \end{cases}.
    \een
\end{theorem}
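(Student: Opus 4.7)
The proof splits into combinatorial bookkeeping and a polynomial-level autocorrelation identity, carried out in the ring $\Cnum[z]/(z^n-1)$, where $C_a(z)\equiv R_a(z)\pmod{z^n-1}$ and $a(z)a^*(z)=z^{n-1}R_a(z)$ for any length-$n$ sequence. Let $L_i\triangleq\lambda_{i+1}-\lambda_i$, which equals $l_im_i$ for $i\leq k$ and $s_{i-k}t_{\lceil(i-k)/2\rceil}$ for $k<i\leq k+2d$, and let $M_i$ denote the length of $\gbf_i$. Reading off \eqref{eq:sparse_seqs}, each $\abf_i$ has total length $\lambda_i+L_i+(n-2\lambda_{i+1})+L_i+\lambda_i=n$ with support $[\lambda_i,\lambda_{i+1}-1]\cup[n-\lambda_{i+1},n-\lambda_i-1]$ symmetric about $(n-1)/2$, and each $\bbf_i$ has length $n$ with symmetric support $[n/2-\lambda_{i+1},n/2-\lambda_i-1]\cup[n/2+\lambda_i,n/2+\lambda_{i+1}-1]$. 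A telescoping calculation using \eqref{eq:total_len} gives $\lambda_{k+2d+1}=n/4$, so the $\{\abf_i\}$ supports partition $[0,n/4-1]\cup[3n/4,n-1]$ and the $\{\bbf_i\}$ supports partition $[n/4,3n/4-1]$, which proves supplementarity. All nonzero entries lie in $S_{\Cnum}$ since $\ebf_i,\fbf_i,\gbf_i,\hbf_i$ are $4$-phase.

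For the autocorrelation identity, set $\pbf_i=\ebf_i\otimes\gbf_i$, $\qbf_i=\fbf_i\otimes\hbf_i$, $\rbf_i=\underline{\ebf_i^*}\otimes\hbf_i$, $\sbf_i=\fbf_i^*\otimes\gbf_i$, so that $a_i(z)=z^{\lambda_i}p_i(z)+z^{n-\lambda_{i+1}}q_i(z)$ and $b_i(z)=z^{n/2-\lambda_{i+1}}r_i(z)+z^{n/2+\lambda_i}s_i(z)$. Reflecting block positions yields $a_i^*(z)=z^{\lambda_i}q_i^*(z)+z^{n-\lambda_{i+1}}p_i^*(z)$ and $b_i^*(z)=z^{n/2-\lambda_{i+1}}s_i^*(z)+z^{n/2+\lambda_i}r_i^*(z)$. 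Expanding $a_ia_i^*+b_ib_i^*$ modulo $z^n-1$, the diagonal monomials merge into $z^{n-1}(R_{p_i}+R_{q_i}+R_{r_i}+R_{s_i})$ (using $p_ip_i^*=z^{L_i-1}R_{p_i}$), while the off-diagonal monomials collapse to $z^{2\lambda_i}(p_iq_i^*+s_ir_i^*)+z^{n-2\lambda_{i+1}}(q_ip_i^*+r_is_i^*)$. The Kronecker identities $(\ebf\otimes\gbf)(z)=e(z^{M_i})g(z)$ and $(\ebf\otimes\gbf)^*(z)=e^*(z^{M_i})g^*(z)$ yield $p_iq_i^*=e_i(z^{M_i})f_i^*(z^{M_i})g_i(z)h_i^*(z)$ and $s_ir_i^*=-f_i^*(z^{M_i})e_i(z^{M_i})g_i(z)h_i^*(z)$; these cancel by commutativity of $\Cnum[z]$, the critical minus sign originating from the $\underline{\ebf_i^*}$ factor in $\rbf_i$. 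An analogous computation gives $q_ip_i^*+r_is_i^*=0$.

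Finally, using $R_{\ebf\otimes\gbf}(z)=R_e(z^{M_i})R_g(z)$, the surviving diagonal sum factors as $(R_{e_i}(z^{M_i})+R_{f_i}(z^{M_i}))(R_{g_i}(z)+R_{h_i}(z))$. For $i\leq k$ both factors are the constants $2l_i$ and $2m_i$, contributing $4L_i$. For $i=k+2j-1,k+2j$ only $\{\gbf_i,\hbf_i\}$ is a GCS pair (of length $t_j$), so the second factor is $2t_j$; the first factor is not individually constant, but summing over the two indices and invoking the CBS property of $\{\ebf_{k+2j-1},\fbf_{k+2j-1},\ebf_{k+2j},\fbf_{k+2j}\}$ collapses the combined first factor to $2(s_{2j-1}+s_{2j})$, giving $4(L_{k+2j-1}+L_{k+2j})$. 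Summing over $i$ and applying \eqref{eq:total_len} yields $\sum_i(R_{p_i}+R_{q_i}+R_{r_i}+R_{s_i})=4\sum_iL_i=n$, so $\sum_i(a_ia_i^*+b_ib_i^*)\equiv nz^{n-1}\pmod{z^n-1}$; via $a(z)a^*(z)=z^{n-1}R_a(z)$ and $C_a\equiv R_a\pmod{z^n-1}$, this is exactly $\sum_i(C_{a_i}(\tau)+C_{b_i}(\tau))=n\delta_{\tau,0}$. The main difficulty is twofold: carrying the signs and conjugate-flips in $\underline{\ebf_i^*}\otimes\hbf_i$ and $\fbf_i^*\otimes\gbf_i$ through the expansion so that the off-diagonal contributions cancel at matched powers of $z$ modulo $z^n-1$, and recognizing that the CBS-type indices must be grouped in consecutive pairs $(k+2j-1,k+2j)$ rather than individually in order to exploit the four-sequence CBS complementarity.
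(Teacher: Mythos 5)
Your proof is correct and takes essentially the same route as the paper's: the same block decomposition, the same cancellation of the cross terms via the sign carried by $\underline{\ebf_i^*}$, the same pairwise grouping of the CBS indices, and the same telescoping of $4\sum_i L_i$ to $n$. The only difference is presentational: the paper first establishes \emph{aperiodic} complementarity of $\abf_i$ against the cyclic shift $\bbf_i^{\prime}$ of $\bbf_i$ (by analogy with Proposition \ref{prop:pre_quad}) and then passes to periodic correlations via $C_a(\tau)=R_a(\tau)+R_a(\tau-n)$ and shift-invariance of $C_{b_i}$, whereas you carry out the identical cancellations directly in $\Cnum[z]/(z^n-1)$.
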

\begin{proof}
    It is obvious from \eqref{eq:sparse_seqs} that $\abf_1, \bbf_1$, $\cdots$, $\abf_{k+2d}, \bbf_{k+2d}$ are quasi-symmetric and supplementary.
    
    For $i=1, 2, \cdots, k+2d$, let 
    \ben
    \bbf_i^{\prime} =\ {\bf 0}^{(\lambda_i)}\,\vert\,\fbf_i^* \otimes \gbf_i\,\vert\,{\bf 0}^{(n-2\lambda_{i+1})}\,\vert\,\underline{\ebf_i^*} \otimes \hbf_i\,\vert\,{\bf 0}^{(\lambda_i)}.
    \een
    Similar to the proof of Proposition \ref{prop:pre_quad}, we have
    \ben \label{eq:inser_corr}
    R_{a_i}(\tau) + R_{b_i^{\prime}}(\tau) = \begin{cases} 
        4l_i m_i,& \tau = 0\\
        0,& \tau \neq 0
    \end{cases},\ 1\leq i\leq k,
    \een
    \bea \label{eq:inser_quad_corr}
    &R_{a_{k+2i-1}}(\tau) + R_{b_{k+2i-1}^{\prime}}(\tau) + R_{a_{k+2i}}(\tau) + R_{b_{k+2i}^{\prime}}(\tau) \\
    &= \begin{cases} 
        4\left(s_{2i-1} + s_{2i}\right) t_{i}, & \tau = 0\\
        0,& \tau \neq 0
    \end{cases}, \ 1\leq i\leq d,
    \eea
    Combining \eqref{eq:ac_and_pc}, \eqref{eq:total_len}, \eqref{eq:inser_corr} and \eqref{eq:inser_quad_corr}, we have 
    \ben
    \sum_{i=1}^{k+2d}C_{a_i}(\tau) + C_{b_i^{\prime}}(\tau) = \begin{cases} 
        n,& \tau = 0\\
        0,& \tau \neq 0
    \end{cases}.
    \een
    Because $\bbf_i$ is a cyclic shift of $\bbf_i^{\prime}$, $C_{b_i}(\tau) = C_{b_i^{\prime}}(\tau)$. Therefore, 
    \ben
    \sum_{i=1}^{k+2d} C_{a_i} (\tau) +  C_{b_i} (\tau) = \begin{cases}
        n,& \tau = 0\\
        0,& \tau \neq 0
    \end{cases}.
    \een
\end{proof}

An example of Theorem \ref{thm:disjoint_symmetric}: let $K=d=1$, $\{\ebf_1, \fbf_1\}$ and $\{\gbf_i, \hbf_i\}, i=1, 2, 3$, are $4$-phase GCS pairs of length $3$ [c.f. \eqref{gcs:3}], and $\{\ebf_2, \fbf_2, \ebf_3, \fbf_3\} \in CBS(8, 7)$ [c.f. \eqref{eq:cbs87}], then $n=4\times 3\times 3 + 4\times(8+7)\times 3 = 216$, $\lambda_1 = 0, \lambda_2 = 9, \lambda_3 = 33, \lambda_4 = 54$, and 
\bea
\abf_1 =&\ \ebf_1 \otimes \gbf_1\,\vert\,{\bf 0}^{(198)}\,\vert\,\fbf_1 \otimes \hbf_1,\\
\abf_2 =&\ {\bf 0}^{(9)}\,\vert\,\ebf_2 \otimes \gbf_2\,\vert\,{\bf 0}^{(150)}\,\vert\,\fbf_2 \otimes \hbf_2\,\vert\,{\bf 0}^{(9)},\\
\abf_3 =&\ {\bf 0}^{(33)}\,\vert\,\ebf_3 \otimes \gbf_3\,\vert\,{\bf 0}^{(108)}\,\vert\,\fbf_3 \otimes \hbf_3\,\vert\,{\bf 0}^{(33)},\\
\bbf_1 =&\ {\bf 0}^{(99)}\,\vert\,\underline{\ebf_1^*} \otimes \hbf_1\,\vert\,\fbf_1^* \otimes \gbf_1\,\vert\,{\bf 0}^{(99)},\\
\bbf_2 =&\ {\bf 0}^{(75)}\,\vert\,\underline{\ebf_2^*} \otimes \hbf_2\,\vert\,{\bf 0}^{(18)}\,\vert\,\fbf_2^* \otimes \gbf_2\,\vert\,{\bf 0}^{(75)}, \\
\bbf_3 =&\ {\bf 0}^{(54)}\,\vert\,\underline{\ebf_3^*} \otimes \hbf_3\,\vert\,{\bf 0}^{(66)}\,\vert\,\fbf_3^* \otimes \gbf_3\,\vert\,{\bf 0}^{(54)},
\eea
which are quasi-symmetric and supplementary sequences over $\{0\} \cup S_{\Cnum}$, and 
\ben
\sum_{i=1}^{3} C_{a_i} (\tau) +  C_{b_i} (\tau) = \begin{cases}
    216,& \tau = 0\\
    0,& \tau \neq 0
\end{cases}.
\een

\begin{remark} \label{remark:disjoint_symmetric}
    \cite[Section 5]{craigen1995signed} also constructed some periodic complementary sequences, adding up the lengths of a series of aperiodic complementary sequences. Our construction differs from that in the following two aspects.
    \begin{enumerate}
        \item As mentioned in Remark \ref{remark:corr_sum}, the definitions of "quasi-symmetric" are different. Consequently, here we do not need the two trivial sequences composed of one $1$ and multiple zeros in \cite{craigen1995signed}.
        \item Instead of directly adding up the lengths of the GCS pairs and the CBS, we first multiply the lengths with $4$-phase Golay numbers, and then add up the products. By Lemma \ref{coro:product_golay}, the total length $n$ in \eqref{eq:total_len} would be more flexible.
    \end{enumerate}
\end{remark}

\subsection{Asymptotic Existence of Hadamard Matrices} \label{subsec:main}
Now we are ready to collect the results in Section \ref{subsec:bridge}, Section \ref{subsec:seq_combine} and Section \ref{subsec:input} to prove Theorem \ref{thm:main}. 

First, we sequentially feed into Theorem \ref{thm:seq_combine} the sequences constructed by Lemma \ref{thm:disjoint_symmetric} for $2k+4d-1$ times:
\begin{enumerate}
    \item In the first iteration, we can feed $\abf_1, \bbf_1$ into Theorem \ref{thm:seq_combine} to construct a sequence $\cbf$ over $\{0\} \cup SP_{4}$, because the conditions of \eqref{cond:normal} and \eqref{cond:commu} are satisfied automatically for complex sequences;
    \item In the $t$-th iteration of applying Theorem \ref{thm:seq_combine}, the sequence $\cbf$ over $\{0\} \cup SP_{2^{t}}$ constructed in the last iteration can be regarded as $\abf$, and one of the unused sequences constructed by Lemma \ref{thm:disjoint_symmetric} can be regarded as $\bbf$. The reasons are as follows. The condition \eqref{cond:normal} holds because $R_c(\tau) = R_{c^*}(\tau)$ in the last iteration and a complex sequence always satisfies \eqref{cond:normal}. The condition \eqref{cond:commu} is also satisfied: $a_i \in \{0\} \cup SP_{2^{t}}$ can be represented by a complex matrix of order $2^{t-1}$ and $b_j \in \{0\}\cup S_{\mathcal{C}}$ can be represented by a scalar complex matrix of order $2^{t-1}$ by embedding, and their multiplication is commutative.
\end{enumerate}


Now we have obtained a sequence $\cbf$ over $SP_{2^{2k+4d}}$ with non-zero entries since $\abf_1, \bbf_1, \cdots, \abf_{k+2d}, \bbf_{k+2d}$ are supplementary, and 
\bea
C_c(\tau) =&\ R_c(\tau) + R_c(\tau-n)\\ 
=&\ \sum_{i=1}^{k+2d} R_{a_i} (\tau) +  R_{b_i} (\tau) + R_{a_i}(\tau-n) + R_{b_i}(\tau-n)\\
=&\ \sum_{i=1}^{k+2d} C_{a_i} (\tau) +  C_{b_i} (\tau) = \begin{cases}
        n,& \tau = 0\\
        0,& \tau \neq 0
    \end{cases}.
\eea
Hence $\cbf$ is a perfect sequence over  $SP_{2^{2k+4d}}$ of length $n$ [c.f. \ref{eq:total_len}]. Because there exist Sylvester-type Hadamard matrices of order $\Hbf_{2^{t}}, \forall\, t \in \Znum^{+}$ \cite{seberry2020hadamard}, the perfect sequence $\cbf$ can be used in Lemma \ref{lem:sg_hadamard} to construct a Hadamard matrix, leading to the following corollary.
\begin{Corollary}  \label{coro:exist_hadamard}
     There exist block-circulant Hadamard matrices of order $2^{2k+4d+2}\left(\sum_{i=1}^{k} l_i m_i + \sum_{i=1}^{d} \left(s_{2i-1} + s_{2i}\right)t_i\right)$ with block size $2^{2k+4d}$, where $l_1, m_1, \cdots, l_k, m_k$, $t_1, \cdots, t_d$ are $4$-phase Golay numbers and $s_{2i-1},  s_{2i}$ are the lengths of a CBS for $\forall\, 1\leq i \leq d$.  
\end{Corollary}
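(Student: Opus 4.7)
The plan is to assemble Corollary \ref{coro:exist_hadamard} from three ingredients already in hand: Theorem \ref{thm:disjoint_symmetric} to produce complex-valued supplementary building blocks with perfect total periodic autocorrelation, Theorem \ref{thm:seq_combine} to amalgamate them into a single perfect sequence over a signed symmetric group, and Lemma \ref{lem:sg_hadamard} to convert that perfect sequence into a block-circulant Hadamard matrix.

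First I would feed the given $4$-phase Golay numbers $l_i,m_i$ ($1\le i\le k$), the CBS lengths $s_{2i-1},s_{2i}$ and the Golay numbers $t_i$ ($1\le i\le d$) into Theorem \ref{thm:disjoint_symmetric}. This yields $2(k+2d)$ sequences $\abf_1,\bbf_1,\ldots,\abf_{k+2d},\bbf_{k+2d}$ over $\{0\}\cup S_{\Cnum}$, each of length $n=4\sum_{i=1}^k l_i m_i+4\sum_{i=1}^d(s_{2i-1}+s_{2i})t_i$, that are quasi-symmetric, pairwise disjoint, supplementary (so their supports partition $\Znum_n$), and satisfy $\sum_{i}C_{a_i}(\tau)+C_{b_i}(\tau)=n$ for $\tau=0$ and $0$ otherwise.

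Next I would run Theorem \ref{thm:seq_combine} iteratively $2k+4d-1$ times. Viewing $S_{\Cnum}$ as a subgroup of $SP_2$, the first application merges $\abf_1,\bbf_1$ into a sequence $\cbf^{(1)}$ over $\{0\}\cup SP_4$. Inductively, after the $t$-th application I have $\cbf^{(t)}$ over $\{0\}\cup SP_{2^{t+1}}$; I then combine it with the next unused $\abf_i$ or $\bbf_i$, which I embed scalarly into $SP_{2^{t+1}}$ via the identification $a\mapsto\mathrm{diag}(a,\ldots,a)$ in \eqref{eq:embedding}. At each stage the hypotheses of Theorem \ref{thm:seq_combine} must be checked: disjointness and quasi-symmetry are maintained because the running support is the union of the supports and Theorem \ref{thm:seq_combine} preserves the symmetric-support property; condition \eqref{cond:normal} is preserved inductively since Theorem \ref{thm:seq_combine} concludes $R_{c}(\tau)=R_{c^*}(\tau)$ at every stage and the incoming complex sequence trivially satisfies $R_{b}(\tau)=R_{b^*}(\tau)$; and condition \eqref{cond:commu} holds because the embedded complex scalar matrix commutes with every $a_i\in SP_{2^{t+1}}$. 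After the final iteration I obtain a single sequence $\cbf$ over $SP_{2^{2k+4d}}$ of length $n$. By supplementariness of the original $\abf_i,\bbf_i$ the support of $\cbf$ is all of $\Znum_n$, so $\cbf$ is genuinely $SP_{2^{2k+4d}}$-valued (no zeros). Repeated application of \eqref{eq:corr_combine} gives $R_c(\tau)=\sum_i R_{a_i}(\tau)+R_{b_i}(\tau)$, and \eqref{eq:ac_and_pc} then yields $C_c(\tau)=\sum_i C_{a_i}(\tau)+C_{b_i}(\tau)=n\,\delta(\tau)$, so $\cbf$ is a perfect sequence.

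Finally, since Sylvester's construction produces a Hadamard matrix of order $2^{2k+4d}$, Lemma \ref{lem:sg_hadamard} applied to $\cbf$ with $v=2^{2k+4d}$ delivers a block-circulant Hadamard matrix of order
\[
vn=2^{2k+4d}\cdot 4\!\left(\sum_{i=1}^{k} l_i m_i+\sum_{i=1}^{d}(s_{2i-1}+s_{2i})t_i\right)=2^{2k+4d+2}\!\left(\sum_{i=1}^{k} l_i m_i+\sum_{i=1}^{d}(s_{2i-1}+s_{2i})t_i\right)
\]
with block size $v=2^{2k+4d}$, which is exactly the claim. The main obstacle is the inductive verification that conditions \eqref{cond:normal} and \eqref{cond:commu} of Theorem \ref{thm:seq_combine} persist across iterations --- one must track that the freshly appended $\abf_i$ or $\bbf_i$, when embedded as a scalar matrix in $SP_{2^{t+1}}$, commutes pointwise with the accumulated sequence $\cbf^{(t)}$, and that the self-normality relation $R_{c^{(t)}}=R_{c^{(t)\,*}}$ is not lost when going from aperiodic to periodic correlation at the end. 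Both are handled cleanly by the embedding \eqref{eq:embedding} and by Theorem \ref{thm:seq_combine} itself; the remaining length and support bookkeeping is routine.
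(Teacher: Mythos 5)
Your proposal is correct and follows essentially the same route as the paper: feed the Golay numbers and CBS lengths into Theorem \ref{thm:disjoint_symmetric}, iterate Theorem \ref{thm:seq_combine} exactly $2k+4d-1$ times (verifying \eqref{cond:normal} and \eqref{cond:commu} via the scalar embedding \eqref{eq:embedding}, just as the paper does) to obtain a perfect sequence over $SP_{2^{2k+4d}}$, and then invoke Lemma \ref{lem:sg_hadamard} with a Sylvester-type Hadamard matrix of order $2^{2k+4d}$. The bookkeeping of the order $vn=2^{2k+4d+2}\left(\sum_{i=1}^{k} l_i m_i+\sum_{i=1}^{d}(s_{2i-1}+s_{2i})t_i\right)$ and block size matches the paper's argument exactly.
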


\begin{remark}
    Owing to the differences mentioned in Remark \ref{remark:corr_sum} and Remark \ref{remark:disjoint_symmetric}, the results in Corollary \ref{coro:exist_hadamard} is different from those in \cite{craigen1995signed}\cite[Theorem 3.11]{livinskyi2012asymptotic}, where the order is $2^{2k+4d+3}\left(1+2\sum_{i=1}^{k} l_i + 2\sum_{i=1}^{2d} s_{i}\right)$ and the block size is $2^{2k+4d+2}$.
\end{remark}

Our Matlab\textsuperscript{TM} codes that can generate Hadamard matrices from $4$-phase GCS pairs are available online: \url{https://github.com/csrlab-fudan/gcs_hadamard}.

\begin{table*}[t]
\caption{Results obtained from Sequences in Theorem \ref{thm:disjoint_symmetric}}  \label{tab:bnk}
    \centering
    \begin{tabular}{|c|c|c|c|c|c|}
        \hline
            $\gamma$ & $b_{\gamma}^{(0)}$ & $b_{\gamma}^{(1)}$ & $b_{\gamma}^{(2)}$ &$b_{\gamma}^{(3)}$ &$b_{\gamma}^{(4)}$ \\
        \hline
            $4$ & $546$ & $1030$ & $1030$ &$1030$ &$1030$\\ 
        \hline
            $6$ & $\numprint{436146}$ & $\numprint{161926498}$ & $\numprint{11736430180}$ & $\numprint{313523649186}$ & $\geq 2^{40}$\\
        \hline
            $8$ & $\geq 2^{44}$ & $\geq 2^{43}$ & $\geq 2^{42}$ & $\geq 2^{41}$ & $\geq 2^{40}$\\
        \hline
    \end{tabular}
\end{table*}

Finally we give a crude lower bound of $\gamma(N) \triangleq 2k+4d$, for which $\sum_{i=1}^{k} l_i m_i + \sum_{i=1}^{d} \left(s_{2i-1} + s_{2i}\right)t_i$ can cover $\forall\, N \in \Znum^+$.

Given $\kappa \in \undertilde{\mathcal{S}}$, suppose $\sum_{i=1}^{k} l_i m_i + \sum_{i=1}^{d} \left(s_{2i-1} + s_{2i}\right)t_i$ covers all the lengths of form $\kappa^{i} b$ for any $1\leq b \leq b_{\gamma}^{(i)}$ with given integers $\gamma$ and $i$. Let $b_{\gamma_0}^{(i_0)} \triangleq \max\{b_{\gamma}^{(i)}\vert\gamma=4, 6, 0\leq i \leq i_{max}\}$, 
$\xi \triangleq \lfloor \log_{\kappa} \min\{b_{\gamma_0}^{(i_0)}, b_{8}^{(0)}\}\rfloor$ and $P \triangleq \kappa^{\xi}$ ($i_{max}$ is a proper integer such that $i_0 \leq \xi$). Expand $N$ to the base $P$ as $N = \sum_{i=0}^{q} N_iP^i$, where $q = \lfloor \log_{P}N\rfloor$ and $0\leq N_i< P$. Define $\mathcal{I} \triangleq \{i \,\vert\,N_i\neq0, i=1, \cdots, q\}$ and $r \triangleq \abs{\mathcal{I}}$. For given $i \in \mathcal{I}$, we have $N_i P^i = N_i \kappa^{i_0} \kappa^{i\xi-i_0}$. Because $N_i <P \leq b_{\gamma_0}^{(i_0)} $, by assumption $N_i \kappa^{i_0}$ can be covered if $\gamma = \gamma_0$. Besides, the lengths of $\gbf_1, \hbf_1, \cdots, \gbf_{k+2d}, \hbf_{k+2d}$ in Theorem \ref{thm:disjoint_symmetric} can be enlarged by a factor of $\kappa^{i\xi-i_0}$ by Proposition \ref{prop:quater_seq}. Hence $N_i P^i$  for given $i \in \mathcal{I}$ can be covered if $\gamma = \gamma_0$. For $N_0 \neq 0$, because $N_0 < P \leq b_8^{(0)}$, $N_0$ can be covered if $\gamma = 8$. Hence $\forall\, N\in \Znum^+$ can be covered if 
\ben
\gamma(N) = \begin{cases}
    \gamma_0 r, & N_0 = 0 \\
    8+\gamma_0r, & N_0 \neq 0
\end{cases}.
\een

For example, let $N_{max}=2^{44}$, $i_{max} = 4$, $\kappa = 2$, $b_{\gamma}^{(i)}$ obtained by computational verification are listed in Table \ref{tab:bnk}, where $b_8^{(0)} \geq 2^{44}$ and $b_{\gamma_0}^{(i_0)} = b_6^{(4)} \geq 2^{40}$. Then we have $\xi \geq 40 > i_0$, $P\geq 2^{40}$, and $r\leq q \leq \lfloor \frac{1}{40}\log_{2}N\rfloor$. Hence $\forall\, N \in \Znum^+$ can be covered if $\gamma(N) =  8+6\lfloor \frac{1}{40}\log_{2}N\rfloor$. The C\texttt{++} codes for the above verification are also available online: \url{https://github.com/csrlab-fudan/gcs_hadamard}. 

Combining the above lower bound and Corollary \ref{coro:exist_hadamard} leads to the conclusion in Theorem \ref{thm:main}: there exist block-circulant Hadamard matrices of order $2^t m$ with block size $2^{t-2}$ for any odd number $m$, where $t= 6\lfloor \frac{1}{40}\log_{2}m\rfloor + 10$.

\section{Conclusions} \label{SEC:con}
In this paper, we construct $4$-phase Golay complementary sequence (GCS) set of cardinality $2^{3+\lceil \log_2 r \rceil}$ with arbitrary sequence length $n$, where the $10^{13}$-base expansion of $n$ has $r$ nonzero digits. We also obtain an improved asymptotic existence of Hadamard matrices: there exist Hadamard matrices of order $2^t m$ for any odd number $m$, where $t = 6\lfloor \frac{1}{40}\log_{2}m\rfloor + 10$.

A promising way to further improve the asymptotic existence of Hadamard matrices is to find a construction of Hadamard matrices of order $2^{t}n$ from a $4$-phase GCS set of cardinality $2^{t-1}$ with arbitrary length $n$, which can be seen as a generalization of Proposition \ref{prop:gs}. If found, combined with the results in Section \ref{SEC:gcs}, $t$ only needs to increase in a log log rate, a much slower rate than a logarithmic one that we currently have.
\bibliographystyle{IEEEtran}
\bibliography{gcs_hadamard}


\end{document}